\documentclass[twocolumn,conference,letterpaper]{IEEEtran}
\usepackage[left=0.75in,right=0.75in,top=1in,bottom=0.75in]{geometry}

\usepackage{cite}
\usepackage{amssymb,amsmath,latexsym,amsfonts,amsthm,mathtools}
\usepackage[colorlinks=true]{hyperref}
\hypersetup{colorlinks,breaklinks,citecolor=blue,linkcolor=blue,urlcolor=blue}
\usepackage{graphicx}
\usepackage{booktabs}
\usepackage{array}
\usepackage{xcolor}
\usepackage{placeins}
\usepackage[nameinlink]{cleveref}
\usepackage{tikz}
\usepackage{pgfplots}
\usepackage{graphicx}
\usepackage{subcaption}
\usepgfplotslibrary{groupplots}
\pgfplotsset{compat=1.18}
\usetikzlibrary{arrows.meta,calc,positioning}

\definecolor{designblue}{RGB}{32,92,160}
\definecolor{designorange}{RGB}{210,112,32}
\definecolor{designgreen}{RGB}{42,132,90}
\definecolor{designgray}{RGB}{90,96,105}
\definecolor{designred}{RGB}{175,55,55}

\theoremstyle{plain}
\newtheorem{theorem}{Theorem}
\newtheorem{lemma}{Lemma}
\newtheorem{corollary}{Corollary}

\newtheorem*{problem*}{Problem}
\theoremstyle{remark}
\newtheorem{remark}{Remark}
\newtheorem{assumption}{Assumption}
\theoremstyle{definition}

\DeclareMathOperator{\dist}{dist}

\newcommand{\R}{\mathbb{R}}
\newcommand{\calC}{\mathcal{C}}
\newcommand{\calD}{\mathcal{D}}
\newcommand{\calF}{\mathcal{F}}
\newcommand{\calM}{\mathcal{M}}
\newcommand{\calT}{\mathcal{T}}

\newcommand{\abs}[1]{\left\lvert #1 \right\rvert}
\newcommand{\trans}{^{\mathsf{T}}}

\IEEEoverridecommandlockouts

\begin{document}

\title{Impact-Time Guidance via Normal Contraction to a Time-to-Go Isochron}

\author{Shivam Bajpai and Abhinav Sinha,~\IEEEmembership{Senior Member,~IEEE}
\thanks{ The authors are with the Guidance, Autonomy, Learning, and Control for Intelligent Systems (GALACxIS) Lab, Department of Aerospace Engineering and Engineering Mechanics, University of Cincinnati, Cincinnati, OH, 45221,  USA
 e-mails: bajpaism@mail.uc.edu, abhinav.sinha@uc.edu}
    }	

\maketitle
\thispagestyle{empty}

\begin{abstract}
We develop a contraction-based perspective on impact-time guidance that augments a baseline homing command with a timing bias. The proposed perspective treats the prescribed schedule as a moving time-to-go isochron and regulates motion normal to that set through velocity-normal lateral acceleration while the interceptor's speed remains constant. We derive a transport equation that characterizes homing-compatible time-to-go coordinates and define a predictor defect that quantifies the mismatch of approximate maps. We show that the scalar timing channel induces a coordinate-invariant rank-one metric on the normal quotient. To account for bounded lateral acceleration, we formulate a robust scalar filter and derive a necessary and sufficient condition for pointwise feasibility. We then show that terminal calibration and funnel invariance establish first interception at the prescribed time under the stated assumptions. We also develop a preterminal alignment and homing handover that avoids singular inversion as lateral timing authority vanishes near collision-course alignment. The proposed perspective accommodates analytic, numerical, and learned time-to-go maps that satisfy the required calibration and regularity conditions.
	\end{abstract}
	
\begin{IEEEkeywords}
Contraction analysis, impact-time guidance, lateral acceleration, intercept guidance, time-to-go.
\end{IEEEkeywords}

\section{Introduction}
Impact-time guidance addresses a homing task in which the interceptor must capture a target at an assigned instant \cite{jeon2006impact}. Coordinating arrival times facilitates simultaneous and cooperative interception \cite{sinha2020super,sinha2021event}, as well as time-constrained pursuit-evasion \cite{sinha2021aircraft}. Related scheduling objectives arise in time-constrained spacecraft rendezvous \cite{xie2026exact}, spacecraft docking \cite{zhang2022appointed}, and scheduled aircraft arrivals \cite{murrietamendoza2020arrival}.

A natural way to incorporate the schedule is to augment a baseline homing command with a timing correction, as in biased proportional navigation (PN) \cite{jeon2006impact}. Lyapunov-based and sliding-mode designs provide nonlinear timing regulation; see \cite{kim2015lyapunov,kumar2015large,sinha2020super}, while formulations based on true PN and deviated pursuit use time-to-go expressions associated with their respective homing laws \cite{kumar2022true,sinha2025deviated}. The predictor generally links the regulated timing error to an estimate of the remaining flight duration. Extending a convergence argument to another time-to-go map requires checking how the predicted time evolves under the executing homing law and whether timing convergence corresponds to physical arrival. The schedule or the prescribed impact time must also remain compatible with available acceleration \cite{gopikannan2026bounded}. For velocity-normal steering, lateral timing authority depends on the engagement geometry and may diminish as collision-course alignment develops in the endgame. The dynamical analysis of the timing channel must accommodate declining authority and establish first capture at the assigned instant.

However, states with different ranges, headings, and future paths can share the same time-to-go, so an assigned arrival time selects a moving level set. Tangential perturbations preserve the timing coordinate to first order, while normal perturbations change the timing error. Contraction analysis and differential Lyapunov theory describe convergence through infinitesimal perturbations \cite{lohmiller1998contraction,forni2014differential}; related developments address orbital stability through transverse contraction \cite{manchester2014transverse} and nonlinear feedback synthesis through control contraction metrics \cite{manchester2017ccm}. For impact-time guidance, the contraction viewpoint relates the nominal timing-error decay established by a scalar Lyapunov argument to local convergence toward the scheduled set. A coordinate-invariant normal metric measures variations that affect timing and vanishes on tangent directions, leaving tangential homing motion unrestricted by the convergence requirement. This geometric description makes the convergence statement independent of the chosen engagement coordinates and retains the freedom to achieve the same arrival schedule along different homing trajectories.

Timing-error convergence must be achieved within the interceptor's acceleration limit. Prescribed-time feedback assigns a convergence deadline \cite{song2017prescribed}, while admissibility-preserving input realization (APIR) embeds actuator bounds in the input dynamics and establishes tracking under compatibility conditions on the desired motion and available control authority \cite{kumar2026admissibility}. For interception, the time-to-go coordinate must be calibrated so that convergence of the coordinate to zero corresponds to physical capture. Keeping the coordinate positive before the assigned instant excludes early interception. The remaining challenge is to connect timing convergence to first capture under bounded acceleration while accounting for predictor mismatch and declining timing authority. We develop a moving time-to-go isochron formulation with the following contributions:
\begin{itemize}
    \item We characterize homing-compatible maps through a transport boundary-value problem and define a predictor defect for approximate maps. The scalar timing channel induces a coordinate-invariant rank-one metric for contraction on the normal quotient.
    \item We derive a necessary and sufficient feasibility condition for a robust scalar projection under the acceleration bound. Terminal calibration connects funnel invariance to first capture at the assigned instant.
    \item We establish preterminal alignment and homing handover conditions that preserve the assigned arrival time for an exact nominal predictor as timing authority diminishes.
\end{itemize}
For a PN baseline, the command retains the biased-PN structure. A stationary-target PN construction and four alternative predictors illustrate the analysis under velocity-normal actuation. Under the stated coordinate, internal-dynamics, and feasibility assumptions, the proposed guidance law ensures first capture at the assigned instant within the interceptor's acceleration limit.

\section{Problem Formulation}\label{sec:problem}
Let $\mathbf{p}_{\mathrm{P}},\mathbf{p}_{\mathrm{T}}\in\R^{2}$ denote the interceptor's and target's positions, respectively. The interceptor has constant speed $V_{\mathrm{P}}>0$, flight-path angle $\gamma_{\mathrm{P}}$, velocity unit vector $\mathbf{h}_{\mathrm{P}}=[\cos{\gamma_{\mathrm{P}}},\sin{\gamma_{\mathrm{P}}}]\trans$, and scalar lateral acceleration $a_{\mathrm{P}}$. The interceptor's kinematics satisfy
\begin{align}
    \dot{\mathbf{p}}_{\mathrm{P}}
    =
    V_{\mathrm{P}}\mathbf{h}_{\mathrm{P}},
    ~
    \dot{\gamma}_{\mathrm{P}}
    =
    \frac{a_{\mathrm{P}}}{V_{\mathrm{P}}};
    ~
    \abs{a_{\mathrm{P}}}
    \leq
    a_{\max}.
    \label{eq:interceptor_kinematics}
\end{align}
Thus, $a_{\mathrm{P}}$ is the interceptor's sole steering command and changes the velocity direction while a propulsion loop maintains $V_{\mathrm{P}}$. An independent radial channel is absent. The kinematic model presumes that a sufficiently fast acceleration autopilot tracks $a_{\mathrm{P}}$.

Write the relative position as $\mathbf{r}=\mathbf{p}_{\mathrm{T}}-\mathbf{p}_{\mathrm{P}}=r\mathbf{e}_r$. Let $\theta$ denote the line-of-sight (LOS) angle, with $\mathbf{e}_{\theta}$ the counterclockwise normal to $\mathbf{e}_r$. We define $\sigma
    \coloneqq
    \gamma_{\mathrm{P}}-\theta,
    ~
    V_r
    \coloneqq
    \dot{r},
    ~
    V_{\theta}
    \coloneqq
    r\dot{\theta}$. With $a_{\mathrm{T}r}=\mathbf{e}_r\trans\mathbf{a}_{\mathrm{T}}$ and $a_{\mathrm{T}\theta}=\mathbf{e}_{\theta}\trans\mathbf{a}_{\mathrm{T}}$, the polar kinematics are
\begin{align}
    \dot{r}
    =&
    V_r,
    ~
    \dot{\theta}
    =
    \frac{V_{\theta}}{r},
    \label{eq:polar_1}
    \\
    \dot{V}_r
    =&
    \frac{V_{\theta}^2}{r}
    +a_{\mathrm{T}r}
    +a_{\mathrm{P}}\sin{\sigma},
    \label{eq:polar_2}
\end{align}
The transverse velocity and heading-error dynamics retain the geometry-dependent effects of the same lateral input, so
\begin{align}
    \dot{V}_{\theta}
    =&
    -\frac{V_rV_{\theta}}{r}
    +a_{\mathrm{T}\theta}
    -a_{\mathrm{P}}\cos{\sigma},
    \label{eq:polar_3}
    \\
    \dot{\sigma}
    =&
    \frac{a_{\mathrm{P}}}{V_{\mathrm{P}}}
    -\frac{V_{\theta}}{r}.
    \label{eq:polar_4}
\end{align}
The polar dynamics \eqref{eq:polar_2}--\eqref{eq:polar_4} account for both the radial and transverse LOS components of velocity-normal acceleration. For analysis, we use the minimal Cartesian state $\mathbf{x}\coloneqq[\mathbf{p}_{\mathrm{P}}\trans,\gamma_{\mathrm{P}},\mathbf{z}_{\mathrm{T}}\trans]\trans$, where $\mathbf{z}_{\mathrm{T}}$ collects the variables describing the target in the selected predictor. For a stationary target, $\mathbf{z}_{\mathrm{T}}=\mathbf{p}_{\mathrm{T}}$; a constant-velocity model uses $\mathbf{z}_{\mathrm{T}}=[\mathbf{p}_{\mathrm{T}}\trans,\mathbf{v}_{\mathrm{T}}\trans]\trans$; and a maneuver model may append acceleration or estimator states. The polar quantities in \eqref{eq:polar_1}--\eqref{eq:polar_4} are functions of $\mathbf{x}$ rather than independent states. The state dynamics take the form
\begin{align}
    \dot{\mathbf{x}}
    =&
    \mathbf{f}(\mathbf{x},t)
    +\mathbf{g}(\mathbf{x},t)a_{\mathrm{P}}
    +\mathbf{d}(\mathbf{x},t).
    \label{eq:affine_plant}
\end{align}
The nominal target model is included in $\mathbf{f}$, while $\mathbf{d}$ contains prediction error and unmodeled components of the target's acceleration. The fields are locally Lipschitz and continuously differentiable wherever the differential analysis is invoked. Let $\calC$ be a closed capture set and define $q(\mathbf{x})=\dist(\mathbf{x},\calC)$.

Consider a baseline homing law $a_{\mathrm{H}}(\mathbf{x},t)$, such as PN, whose solutions remain in an engagement domain $\calD$ and reach $\calC$. Let $\mathbf{x}_{\mathrm{H}}(\tau;t,\mathbf{x})$ denote the trajectory of \eqref{eq:affine_plant} under $a_{\mathrm{P}}=a_{\mathrm{H}}$ and $\mathbf{d}=0$, with $\mathbf{x}_{\mathrm{H}}(t;t,\mathbf{x})=\mathbf{x}$. The baseline first hitting time, measured from the current time, is
\begin{align}
    \calT_{\mathrm{H}}(\mathbf{x},t)\coloneqq&
    \inf\left\{
    \tau\geq0:
    \mathbf{x}_{\mathrm{H}}(t+\tau;t,\mathbf{x})\in\calC
    \right\}.
    \label{eq:baseline_hitting_time}
\end{align}
Thus, $\calT_{\mathrm{H}}(\mathbf{x},t)$ is the baseline time-to-go, $t+\calT_{\mathrm{H}}(\mathbf{x},t)$ is the corresponding predicted absolute impact time, and $t_f$ denotes the prescribed absolute impact time. Along every nominal homing trajectory, the exact map decreases at one second per second and vanishes at capture. Those two properties yield the transport boundary-value problem
\begin{align}
    \frac{\partial\calT}{\partial t}
    +\nabla_{\mathbf{x}}\calT\trans
    \left(\mathbf{f}+\mathbf{g}a_{\mathrm{H}}\right)
    =&
    -1;
    ~~
    \calT\big|_{\partial\calC}=0,
    \label{eq:transport_pde}
\end{align}
whose characteristic curves are the nominal baseline trajectories $\mathbf{x}_{\mathrm{H}}(\tau;t,\mathbf{x})$. The transport problem can be solved analytically, integrated along the baseline trajectories, or tabulated offline. Closed-form time-to-go approximations remain useful when their disagreement with the baseline flow is quantified. For any candidate map, define the predictor defect
\begin{align}
    \varepsilon_{\calT}
    \coloneqq&
    \frac{\partial\calT}{\partial t}
    +\nabla_{\mathbf{x}}\calT\trans
    \left(\mathbf{f}+\mathbf{g}a_{\mathrm{H}}\right)
    +1.
    \label{eq:predictor_defect}
\end{align}
Hence, $\varepsilon_{\calT}=0$ when the selected map evolves as the exact baseline time-to-go under the nominal homing flow. A nonzero defect identifies the term that must be canceled, bounded, or carried into the timing-error estimate.
\begin{assumption}
\label{ass:coordinate}
The map $\calT:(\calD\setminus\calC)\times\R_{\geq0}\to\R_{>0}$ is twice continuously differentiable and extends continuously to $\calC$ with $\calT=0$ on $\calC$. There exist a terminal neighborhood $\mathcal{N}_{\calC}\subset\calD$ and a constant $\tau_c>0$ such that
\begin{align}
    \{\mathbf{x}\in\calD:0\leq\calT(\mathbf{x},t)\leq\tau_c\}
    \subseteq&\ \mathcal{N}_{\calC},
    \label{eq:terminal_sublevel}
\end{align}
and, on $\mathcal{N}_{\calC}$,
\begin{align}
    \underline{\alpha}(q)
    \leq&
    \calT(\mathbf{x},t)
    \leq
    \overline{\alpha}(q)
    \label{eq:terminal_calibration}
\end{align}
for class-$\mathcal{K}$ functions $\underline{\alpha}$ and
$\overline{\alpha}$. On every compact timing tube considered in the analysis, the level sets of $\calT$ are regular, and there exist constants $0<\underline{p}\leq\overline{p}<\infty$ such that
$\underline{p}\leq\|\nabla_{\mathbf{x}}\calT\|\leq\overline{p}$.
The baseline and timing-corrected internal dynamics remain in $\calD$, and solutions extend continuously to the assigned terminal time.
\end{assumption}
Terminal calibration in \eqref{eq:terminal_calibration} and the terminal-sublevel condition \eqref{eq:terminal_sublevel} tie small coordinate values to proximity to capture. A sufficient construction is
$\calT=q h(\boldsymbol{\eta})$ with
$0<h_-\leq h\leq h_+$ and bounded derivatives. The internal-dynamics requirement in \Cref{ass:coordinate} remains essential: normal contraction regulates timing, while seeker field of view, terminal angle, and other engagement coordinates require their own constraints.

The predictor defect quantifies the accumulated error in the baseline arrival-time estimate.
\begin{lemma}
\label{prop:defect_accumulation}
Let $\calT_{\mathrm{H}}$ be the exact baseline first-hitting-time map in \eqref{eq:baseline_hitting_time}, and let $\calT$ satisfy the same zero boundary condition on $\calC$. Along the nominal baseline trajectory $\mathbf{x}_{\mathrm{H}}(\tau)\coloneqq\mathbf{x}_{\mathrm{H}}(\tau;t,\mathbf{x})$ from $(t,\mathbf{x})$ to first capture,
\begin{align}
\calT(\mathbf{x},t)-\calT_{\mathrm{H}}(\mathbf{x},t)
=&-\int_t^{t+\calT_{\mathrm{H}}(\mathbf{x},t)}
\varepsilon_{\calT}(\mathbf{x}_{\mathrm{H}}(\tau),\tau)\,d\tau.
\label{eq:defect_accumulation}
\end{align}
If $|\varepsilon_{\calT}|\leq\bar{\varepsilon}$ along $\mathbf{x}_{\mathrm{H}}(\tau)$ before capture, then
\begin{align}
|\calT-\calT_{\mathrm{H}}|\leq&\ \bar{\varepsilon}\calT_{\mathrm{H}}.
\label{eq:defect_uniform_bound}
\end{align}
More generally, if $|\varepsilon_{\calT}|\leq c_{\varepsilon}\calT_{\mathrm{H}}^p$ along $\mathbf{x}_{\mathrm{H}}(\tau)$ before capture for some $p\geq0$, then
\begin{align}
|\calT-\calT_{\mathrm{H}}|\leq&\ \frac{c_{\varepsilon}}{p+1}\calT_{\mathrm{H}}^{p+1}.
\label{eq:defect_order_bound}
\end{align}
\end{lemma}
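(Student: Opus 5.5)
The plan is to integrate the predictor defect along the single nominal characteristic through $(t,\mathbf{x})$. Let $T^\star\coloneqq\calT_{\mathrm{H}}(\mathbf{x},t)$ and $\phi(\tau)\coloneqq\calT(\mathbf{x}_{\mathrm{H}}(\tau),\tau)$ for $\tau\in[t,t+T^\star)$. On this interval the trajectory lies in $\calD\setminus\calC$, by the definition of first hitting time in \eqref{eq:baseline_hitting_time}. There $\calT$ is $C^2$ by \Cref{ass:coordinate} and $\mathbf{x}_{\mathrm{H}}$ solves \eqref{eq:affine_plant} with $a_{\mathrm{P}}=a_{\mathrm{H}}$ and $\mathbf{d}=0$. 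The chain rule together with \eqref{eq:predictor_defect} then gives
\begin{align*}
\dot\phi(\tau)=\frac{\partial\calT}{\partial t}+\nabla_{\mathbf{x}}\calT\trans(\mathbf{f}+\mathbf{g}a_{\mathrm{H}})=-1+\varepsilon_{\calT}(\mathbf{x}_{\mathrm{H}}(\tau),\tau).
\end{align*}

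Next I integrate from $t$ to $s<t+T^\star$ and let $s\to t+T^\star$. The endpoint $\mathbf{x}_{\mathrm{H}}(t+T^\star)$ lies in $\calC$. Since $\calT$ extends continuously by zero to $\calC$, we have $\phi(s)\to0$. This yields $0-\calT(\mathbf{x},t)=-T^\star+\int_t^{t+T^\star}\varepsilon_{\calT}\,d\tau$, which rearranges to \eqref{eq:defect_accumulation}. The main technical point is this limit at the capture endpoint, where $\calT$ is only continuous and $\varepsilon_{\calT}$ may be unbounded. I would handle it by writing the identity on $[t,s]$ first. The left side converges by continuity, so the improper integral $\lim_{s\to t+T^\star}\int_t^s\varepsilon_{\calT}\,d\tau$ exists, and I interpret \eqref{eq:defect_accumulation} in that sense. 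Under either bound hypothesis the integrand is bounded, so the integral is also an ordinary Lebesgue integral. If $T^\star=0$, then $\mathbf{x}\in\calC$, both sides vanish, and there is nothing to prove.

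For \eqref{eq:defect_uniform_bound}, the triangle inequality applied to \eqref{eq:defect_accumulation} gives $|\calT-\calT_{\mathrm{H}}|\le\bar\varepsilon T^\star$. For \eqref{eq:defect_order_bound}, I use the semigroup property of the baseline flow. Along the characteristic, $\calT_{\mathrm{H}}(\mathbf{x}_{\mathrm{H}}(\tau),\tau)=t+T^\star-\tau$: the remaining first hitting time of a point on the trajectory is the residual time, because the trajectory does not enter $\calC$ earlier and uniqueness holds by the locally Lipschitz fields. Therefore
\begin{align*}
|\calT-\calT_{\mathrm{H}}|\le c_{\varepsilon}\int_t^{t+T^\star}(t+T^\star-\tau)^p\,d\tau=\frac{c_{\varepsilon}}{p+1}(T^\star)^{p+1}.
\end{align*}
The case $p=0$ recovers \eqref{eq:defect_uniform_bound} with $\bar\varepsilon=c_\varepsilon$.
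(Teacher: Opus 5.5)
Your proof is correct and follows the paper's argument: integrate $d\calT/d\tau=-1+\varepsilon_{\calT}$ along the baseline characteristic up to first capture, then bound the integral using $\calT_{\mathrm{H}}(\mathbf{x}_{\mathrm{H}}(\tau),\tau)=t+\calT_{\mathrm{H}}(\mathbf{x},t)-\tau$. Your variations are small refinements of the same route: you use the interval length and the flow's semigroup property instead of the transport equation for $\calT_{\mathrm{H}}$, which avoids assuming $\calT_{\mathrm{H}}$ is differentiable, and you treat the capture endpoint as a limit.
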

\begin{proof}
Along the nominal baseline trajectory $\mathbf{x}_{\mathrm{H}}(\tau;t,\mathbf{x})$, \eqref{eq:predictor_defect} implies $d\calT/d\tau=-1+\varepsilon_{\calT}$, while \eqref{eq:transport_pde} implies $d\calT_{\mathrm{H}}/d\tau=-1$. The derivative of $\calT-\calT_{\mathrm{H}}$ is thus $\varepsilon_{\calT}$. The common zero boundary condition and the hitting-time definition \eqref{eq:baseline_hitting_time} imply that both coordinates vanish at $t+\calT_{\mathrm{H}}(\mathbf{x},t)$. Integrating the difference dynamics up to capture establishes \eqref{eq:defect_accumulation}. Substitution of $|\varepsilon_{\calT}|\leq\bar{\varepsilon}$ into \eqref{eq:defect_accumulation} yields \eqref{eq:defect_uniform_bound}. For the higher-order estimate, \eqref{eq:transport_pde} implies $\calT_{\mathrm{H}}(\mathbf{x}_{\mathrm{H}}(\tau),\tau)=t+\calT_{\mathrm{H}}(\mathbf{x},t)-\tau$. Integrating the assumed bound $|\varepsilon_{\calT}|\leq c_{\varepsilon}\calT_{\mathrm{H}}^p$ in \eqref{eq:defect_accumulation} then establishes \eqref{eq:defect_order_bound}.
\end{proof}

States with different ranges, headings, and future paths may share the same baseline time-to-go under $a_{\mathrm{H}}$. The scheduling objective is to reach the appropriate moving level set while the homing law governs motion within that set. This geometry motivates the time-to-go isochron formulation. Given a requested impact time $t_f>t_0$, define $s(t)
    \coloneqq
    t_f-t$. The timing error and scheduled level set, respectively, are
\begin{align}
    e(\mathbf{x},t)
    \coloneqq&
    \calT(\mathbf{x},t)-s(t),
    \label{eq:timing_error}
    \\
    \calM_{\mathrm{IT}}(t)
    \coloneqq&
    \left\{
    \mathbf{x}\in\calD\setminus\calC:
    e(\mathbf{x},t)=0
    \right\}.
    \label{eq:impact_manifold}
\end{align}
When $\calT=\calT_{\mathrm{H}}$, every state in $\calM_{\mathrm{IT}}(t)$ reaches $\calC$ at the prescribed absolute time $t_f$ under $a_{\mathrm{H}}$, so $\calM_{\mathrm{IT}}(t)$ is a time-to-go isochron. For an approximate $\calT$, the isochron interpretation requires verification of homing-flow compatibility. The control effectiveness normal to $\calM_{\mathrm{IT}}(t)$ is
\begin{align}
    B_{\calT}
    \coloneqq&
    \nabla_{\mathbf{x}}\calT\trans\mathbf{g}.
    \label{eq:timing_effectiveness}
\end{align}
In a minimal Cartesian state $(\mathbf{p}_{\mathrm{P}},\gamma_{\mathrm{P}},\mathbf{z}_{\mathrm{T}})$,
\begin{align}
    B_{\calT}
    =&
    \frac{1}{V_{\mathrm{P}}}
    \frac{\partial\calT}{\partial\gamma_{\mathrm{P}}}.
    \label{eq:minimal_B}
\end{align}
When redundant polar variables $V_r,V_{\theta}$ are used, their dependence on $\gamma_{\mathrm{P}}$ must be propagated through \eqref{eq:polar_2}--\eqref{eq:polar_3}. Treating $a_{\mathrm{P}}$ as an independent tangential acceleration produces a coefficient inconsistent with the interceptor's kinematic model in \eqref{eq:interceptor_kinematics}.

Along \eqref{eq:affine_plant}, the timing error satisfies the scalar channel
\begin{align}
    \dot{e}
    =&
    \varepsilon_{\calT}
    +B_{\calT}\left(a_{\mathrm{P}}-a_{\mathrm{H}}\right)
    +\Delta_{\calT},
    \label{eq:timing_channel_relative}
\end{align}
where $\Delta_{\calT}
    \coloneqq
    \nabla_{\mathbf{x}}\calT\trans\mathbf{d}$. Equivalently,
\begin{align}
    \dot{e}
    =&
    F_{\calT}+B_{\calT}a_{\mathrm{P}}+\Delta_{\calT},
    \label{eq:timing_channel_total}
\end{align}
with $F_{\calT}
    \coloneqq
    \varepsilon_{\calT}-B_{\calT}a_{\mathrm{H}}$. The relative form \eqref{eq:timing_channel_relative} motivates the command decomposition
\begin{align}
    a_{\mathrm{P}}
    =&
    a_{\mathrm{H}}+a_{\mathrm{IT}},
    \label{eq:biased_homing_decomposition}
\end{align}
where $a_{\mathrm{IT}}$ is the timing bias. The command above has the conventional biased-homing structure. The analysis establishes conditions under which the timing bias compensates for the predictor defect and produces contraction toward $\calM_{\mathrm{IT}}(t)$ while respecting the lateral acceleration limit.
\begin{problem*}
Given $t_f>t_0$, construct a bounded lateral timing bias $a_{\mathrm{IT}}(\mathbf{x},t)$ in \eqref{eq:biased_homing_decomposition} that induces normal contraction toward $\calM_{\mathrm{IT}}(t)$ while satisfying $\abs{a_{\mathrm{H}}+a_{\mathrm{IT}}}\leq a_{\max}$. We also require first capture at $t_f$, namely,
\begin{align}
    \mathbf{x}(t)\notin&\ \calC,~\forall t\in[t_0,t_f),~~
    \mathbf{x}(t_f)\in \calC.
    \label{eq:problem_first_capture}
\end{align}
When $\varepsilon_{\calT}=0$ and $\mathbf{d}=0$, we require $a_{\mathrm{IT}}=0$ on $\calM_{\mathrm{IT}}(t)$, so the baseline homing dynamics are recovered after timing alignment.
\end{problem*}
The acceleration limit constrains the attainable impact times, since the prescribed motion must be compatible with the available lateral authority. Similar feasibility requirements arise in constrained tracking with APIR \cite{kumar2026admissibility} and in safe coordination under time-varying output constraints \cite{sinha2026safeconsensus,sinha2026networked}. For the constant-speed interceptor considered here, the acceleration limit imposes the curvature bound
$\kappa_{\max}=a_{\max}/V_{\mathrm{P}}^2$. Let
$L_{\mathrm{D}}(\mathbf{x}_0,\calC;\kappa_{\max})$ be the shortest bounded-curvature path from the interceptor's initial pose to a stationary capture set. A necessary offline screen is $ t_f-t_0
    \geq
    \dfrac{
    L_{\mathrm{D}}(\mathbf{x}_0,\calC;\kappa_{\max})
    }{V_{\mathrm{P}}}$, which specializes the bounded-curvature geometry in \cite{dubins1957curves}. The inequality provides a preliminary screen for a stationary capture set; moving-target reachability requires further analysis. The online feasibility condition developed next accounts for the timing sensitivity along the engagement.

\section{Main Results}\label{sec:main_results}
We first analyze the unconstrained timing channel, then establish bounded-input feasibility and first capture. The coordinate construction and predictor comparison prepare the terminal analysis, which concludes with preterminal alignment and homing handover. Let $\hat{\Delta}_{\calT}$ be an available compensation term and define
$\tilde{\Delta}_{\calT}
=\Delta_{\calT}-\hat{\Delta}_{\calT}$.
Wherever $B_{\calT}\neq0$, the timing bias cancels the known drift and imposes a prescribed normal rate. The total lateral acceleration takes the form
\begin{align}
    a_{\mathrm{P}}^{\star}
    =&
    a_{\mathrm{H}}
    -\frac{
    \varepsilon_{\calT}
    +\hat{\Delta}_{\calT}
    +k(t)e
    }{B_{\calT}};
    ~~
    k(t)>0.
    \label{eq:ideal_command}
\end{align}
The second term in \eqref{eq:ideal_command} is $a_{\mathrm{IT}}$ in
\eqref{eq:biased_homing_decomposition}. The total command acts through lateral acceleration, with heading changes affecting both LOS velocity components. For an exact nominal coordinate, the timing bias vanishes on the scheduled isochron and the baseline homing law is recovered.
\begin{theorem}
\label{thm:contraction}
Suppose \Cref{ass:coordinate} holds and
\eqref{eq:ideal_command} is well defined on a tubular neighborhood of
$\calM_{\mathrm{IT}}(t)$. The timing error satisfies
\begin{align}
    \dot{e}
    =&
    -k(t)e+\tilde{\Delta}_{\calT}.
    \label{eq:contracting_error}
\end{align}
In the nominal case, let $\mathbf{F}_{\mathrm{cl}}$ be the closed-loop vector field,
$A_{\mathrm{cl}}=\partial\mathbf{F}_{\mathrm{cl}}/\partial\mathbf{x}$,
$\mathbf{p}=\nabla_{\mathbf{x}}e$, and $M_{\perp}
    \coloneqq
    \mathbf{p}\mathbf{p}\trans$. The moving manifold $\calM_{\mathrm{IT}}(t)$ is invariant, and
\begin{align}
    \dot{M}_{\perp}
    +A_{\mathrm{cl}}\trans M_{\perp}
    +M_{\perp}A_{\mathrm{cl}}
    =&
    -2k(t)M_{\perp}.
    \label{eq:metric_identity}
\end{align}
On $\calM_{\mathrm{IT}}(t)$,
$\ker M_{\perp}=T_{\mathbf{x}}\calM_{\mathrm{IT}}(t)$.
The metric $M_{\perp}$ is positive definite on the normal quotient
$T_{\mathbf{x}}\calD/T_{\mathbf{x}}\calM_{\mathrm{IT}}(t)$ and semidefinite on the full state space. If
$\abs{\tilde{\Delta}_{\calT}}\leq\bar{\delta}(t)$, then
\begin{align}
    \abs{e(t)}
    \leq&
    \Phi(t,t_0)\abs{e(t_0)}
    +
    \int_{t_0}^{t}
    \Phi(t,\tau)\bar{\delta}(\tau)\,d\tau,
    \label{eq:robust_contraction_bound}
    \\
    \Phi(t,\tau)
    \coloneqq&
    \exp\left(
    -\int_{\tau}^{t}k(\xi)\,d\xi
    \right).
    \label{eq:transition_function}
\end{align}
\end{theorem}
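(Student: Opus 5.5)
Substituting \eqref{eq:ideal_command} into the total timing channel \eqref{eq:timing_channel_total} gives the error dynamics. Equivalently, we use the relative form \eqref{eq:timing_channel_relative} with $a_{\mathrm{P}}-a_{\mathrm{H}}=-(\varepsilon_{\calT}+\hat{\Delta}_{\calT}+k e)/B_{\calT}$. Then $B_{\calT}(a_{\mathrm{P}}-a_{\mathrm{H}})$ cancels $\varepsilon_{\calT}$ and $\hat{\Delta}_{\calT}$. What remains is $\dot e=-k(t)e+\Delta_{\calT}-\hat{\Delta}_{\calT}$, which is \eqref{eq:contracting_error}. This step only needs $B_{\calT}\neq0$ on the tube and the $C^2$ regularity of $\calT$ from \Cref{ass:coordinate}.

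For the robust bound \eqref{eq:robust_contraction_bound}, we treat \eqref{eq:contracting_error} as a scalar linear time-varying equation along each closed-loop solution. Variation of constants gives $e(t)=\Phi(t,t_0)e(t_0)+\int_{t_0}^t\Phi(t,\tau)\tilde{\Delta}_{\calT}(\tau)\,d\tau$. Taking absolute values and using $\Phi>0$ with $|\tilde{\Delta}_{\calT}|\le\bar\delta$ yields the claim. The solution extends to the interval in question by \Cref{ass:coordinate}.

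For the nominal case, $\tilde{\Delta}_{\calT}=0$, so $e\equiv0$ is preserved, since $e(t_0)=0$ implies $e(t)=\Phi(t,t_0)\cdot0=0$. This gives invariance of $\calM_{\mathrm{IT}}(t)$. The metric identity is the differential version of the same fact. The identity $\partial_t e+\nabla_{\mathbf{x}}e\trans\mathbf{F}_{\mathrm{cl}}=-k(t)e$ holds on the whole open tube, not only along one trajectory. We can therefore differentiate it in $\mathbf{x}$. With $\mathbf{p}=\nabla_{\mathbf{x}}e=\nabla_{\mathbf{x}}\calT$, this gives $\partial_t\mathbf{p}+(\nabla_{\mathbf{x}}\mathbf{p})\mathbf{F}_{\mathrm{cl}}+A_{\mathrm{cl}}\trans\mathbf{p}=-k\mathbf{p}$. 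Here the Hessian term is symmetric, so the sum of the first two terms is the material derivative $\dot{\mathbf{p}}$ along the flow, and hence $\dot{\mathbf{p}}=-A_{\mathrm{cl}}\trans\mathbf{p}-k\mathbf{p}$. Then $\dot M_\perp=\dot{\mathbf{p}}\mathbf{p}\trans+\mathbf{p}\dot{\mathbf{p}}\trans=-A_{\mathrm{cl}}\trans M_\perp-M_\perp A_{\mathrm{cl}}-2kM_\perp$, which is \eqref{eq:metric_identity}.

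The main obstacle is justifying this differentiation. It requires $\mathbf{F}_{\mathrm{cl}}$ to be $C^1$ on the tube, which in turn needs $B_{\calT}$ to be $C^1$ and bounded away from zero there, together with the regularity of $\varepsilon_{\calT}$ and $\hat{\Delta}_{\calT}$. We would state this as part of the hypothesis that \eqref{eq:ideal_command} is well defined on the tubular neighborhood. The term $\partial_t\mathbf{p}$ involves $\partial_t\nabla_{\mathbf{x}}\calT$, which is continuous by the $C^2$ assumption, and $s(t)$ drops out of the gradient.

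Finally, for a tangent vector $\delta\mathbf{x}$ we have $\delta\mathbf{x}\trans M_\perp\delta\mathbf{x}=(\mathbf{p}\trans\delta\mathbf{x})^2\ge0$, so $M_\perp$ is semidefinite on the full state space. Its kernel is $\mathbf{p}^\perp$. Since $\|\mathbf{p}\|\ge\underline{p}>0$, the level set is regular and its tangent space is exactly $\mathbf{p}^\perp$, so $\ker M_\perp=T_{\mathbf{x}}\calM_{\mathrm{IT}}(t)$. The form therefore descends to the one-dimensional quotient $T_{\mathbf{x}}\calD/T_{\mathbf{x}}\calM_{\mathrm{IT}}(t)$, where it is nonzero and hence positive definite. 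It is coordinate invariant because $\mathbf{p}\trans\delta\mathbf{x}=de(\delta\mathbf{x})$ is intrinsic.
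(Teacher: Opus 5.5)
Your proposal is correct and follows essentially the same route as the paper: you substitute \eqref{eq:ideal_command} into the relative timing channel, use variation of constants for \eqref{eq:robust_contraction_bound}, differentiate $\partial_t e+\nabla_{\mathbf{x}}e\trans\mathbf{F}_{\mathrm{cl}}=-k(t)e$ in $\mathbf{x}$ to get $\dot{\mathbf{p}}=-A_{\mathrm{cl}}\trans\mathbf{p}-k\mathbf{p}$, and use level-set regularity for the kernel and quotient claims. Your explicit remarks on Hessian symmetry (needed to identify the material derivative) and on the $C^1$ regularity of $\mathbf{F}_{\mathrm{cl}}$ (requiring $B_{\calT}$ bounded away from zero) spell out details the paper leaves implicit.
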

\begin{proof}
Substituting \eqref{eq:ideal_command} into
\eqref{eq:timing_channel_relative} yields
\eqref{eq:contracting_error}. In the nominal case, \eqref{eq:contracting_error} preserves $e=0$ and implies $\delta\dot{e}=-k(t)\delta e$. Differentiating
$\partial e/\partial t+\nabla e\trans\mathbf{F}_{\mathrm{cl}}=-k(t)e$
with respect to $\mathbf{x}$ produces
$\dot{\mathbf{p}}+A_{\mathrm{cl}}\trans\mathbf{p}=-k\mathbf{p}$,
and hence \eqref{eq:metric_identity}. The level-set regularity in \Cref{ass:coordinate} identifies $\ker M_{\perp}=T_{\mathbf{x}}\calM_{\mathrm{IT}}(t)$.
Under a diffeomorphism with Jacobian $J$,
$M_{\perp}$ transforms as
$J^{-\mathsf{T}}M_{\perp}J^{-1}$, so
$\delta\mathbf{x}\trans M_{\perp}\delta\mathbf{x}=(\delta e)^2$
is coordinate invariant. Applying variation of constants to \eqref{eq:contracting_error} with \eqref{eq:transition_function} establishes
\eqref{eq:robust_contraction_bound}.
\end{proof}
Uniform upper and lower bounds on the directional derivative of $e$ along normal segments imply the local comparison $\abs{e}=\Theta(\dist(\mathbf{x},\calM_{\mathrm{IT}}(t)))$ in the tube. Thus, \Cref{thm:contraction} relates timing-error decay to local distance from the moving time-to-go isochron. Perturbations tangent to the isochron remain governed by $a_{\mathrm{H}}$, domain constraints, or additional states and inputs.
\begin{remark}\label{rem:biased_pn}
If $a_{\mathrm{H}}=N V_{\mathrm{P}}\dot{\theta}$, then
\begin{align}
    a_{\mathrm{P}}^{\star}
    =&
    N V_{\mathrm{P}}\dot{\theta}
    -
    \frac{
    \varepsilon_{\calT}
    +\hat{\Delta}_{\calT}
    +k(t)e
    }{B_{\calT}},
    \label{eq:biased_pn_form}
\end{align}
which has the conventional biased-PN command structure. The scalar identity $\dot{e}=-k e$ also admits a nominal convergence analysis using the Lyapunov function $e^2/2$. The contraction formulation complements that analysis by relating $\calT$ to the executing homing flow and by providing a coordinate-invariant normal metric. The accompanying results address the robust error bound, bounded-input feasibility, first capture, and handover as $B_{\calT}$ vanishes.
\end{remark}
\begin{figure}[!ht]
    \centering
    \begin{tikzpicture}[
        font=\footnotesize,
        axis/.style={-{Latex[length=1.4mm,width=1mm]}, draw=designgray, line width=0.45pt},
        variation/.style={-{Latex[length=1.7mm,width=1.2mm]}, line width=0.9pt},
        path/.style={-{Latex[length=1.7mm,width=1.2mm]}, line width=0.9pt},
        level/.style={draw=designblue, line width=0.85pt}
    ]
        \foreach \offset/\shade in {-1.25/designblue,1.25/designorange} {
            \path[fill=\shade!8, draw=none]
                plot[domain=0.90:3.85, samples=60, variable=\y]
                    ({3.10+0.28*(\y-1.75)^2},\y)
                -- plot[domain=3.85:0.90, samples=60, variable=\y]
                    ({3.10+\offset+0.28*(\y-1.75)^2},\y)
                -- cycle;
        }
        \draw[axis] (0.55,0.65) -- (7.55,0.65) node[below] {$\xi_1$};
        \draw[axis] (0.55,0.65) -- (0.55,4.10) node[left] {$\xi_2$};
        \foreach \offset in {-1.25,1.25} {
            \draw[designgray!65, densely dashed, line width=0.55pt,
                domain=0.90:3.85, samples=60, variable=\y]
                plot ({3.10+\offset+0.28*(\y-1.75)^2},\y);
        }
        \draw[level, domain=0.90:3.85, samples=60, variable=\y]
            plot ({3.10+0.28*(\y-1.75)^2},\y);
        \node[above, text=designgray] at (3.0848,3.87) {$e<0$};
        \node[above, align=center, text=designblue] at (4.3348,3.87)
            {$\calM_{\mathrm{IT}}(t)$\\$e=0$};
        \node[above, text=designgray] at (5.5848,3.87) {$e>0$};
        \draw[designgray!65, densely dotted, line width=0.55pt]
            (3.10,3.05) -- (4.65,3.05) -- (4.65,1.75);
        \draw[variation, designgreen] (3.10,1.75) -- (3.10,3.05);
        \node[anchor=east, align=right, text=designgreen] at (2.93,2.50)
            {$\delta\mathbf{x}_{\parallel}$\\$\delta e=0$};
        \draw[variation, designorange] (3.10,1.75) -- (4.65,1.75);
        \node[below, text=designorange] at (3.96,1.67)
            {$\delta\mathbf{x}_{\perp}$};
        \draw[variation, black!85] (3.10,1.75)
            -- node[above, sloped, pos=0.61] {$\delta\mathbf{x}$} (4.65,3.05);
        \fill (3.10,1.75) circle (1.5pt);
        \node[below left] at (3.06,1.72) {$\mathbf{x}_a$};
        \fill (3.9092,3.45) circle (1.5pt);
        \node[below right] at (4.00,3.44) {$\mathbf{x}_b$};
        \fill (4.65,3.05) circle (1.3pt);
        \node[anchor=west, align=left] at (5.15,2.40)
            {$\delta e=\mathbf{p}\trans\delta\mathbf{x}_{\perp}$\\
             $\mathbf{p}=\nabla_{\mathbf{x}}e$};
        \node at (6.25,1.85)
            {$\delta\mathbf{x}=\delta\mathbf{x}_{\parallel}+\delta\mathbf{x}_{\perp}$};
        \node at (6.25,1.55)
            {$\delta\mathbf{x}\trans M_{\perp}\delta\mathbf{x}=(\delta e)^2$};

        \node at (4.05,0.2)
            {(a) Timing variations at a fixed time.};

        \begin{scope}[yshift=-3.60cm]
            \draw[axis] (0.75,0.45) -- (7.55,0.45)
                node[below] {$\calT_{\mathrm{H}}$};
            \draw[axis] (0.75,0.45) -- (0.75,3.25) node[left] {$\eta$};
            \fill[black!12] (0.61,0.72) rectangle (0.89,2.98);
            \draw[line width=1.1pt] (0.75,0.72) -- (0.75,2.98);
            \node[anchor=east] at (0.53,1.85) {$\calC$};
            \node[below] at (0.75,0.45) {$0$};
            \foreach \xx/\index in {6.15/0,4.35/1,2.55/2} {
                \draw[level, densely dashed] (\xx,0.72) -- (\xx,2.98);
                \node[above, text=designblue] at (\xx,3.02)
                    {$\calM_{\mathrm{IT}}(t_{\index})$};
                \node[below] at (\xx,0.45) {$t_f-t_{\index}$};
            }
            \draw[path, designgreen, domain=6.15:0.75, samples=80, variable=\x]
                plot (\x,{2.60-0.60*sin(180*(\x-0.75)/5.4)+0.15*(\x-0.75)/5.4});
            \draw[path, designorange, domain=6.15:0.75, samples=80, variable=\x]
                plot (\x,{0.90+0.45*(1-(\x-0.75)/5.4)^2+0.30*sin(180*(\x-0.75)/5.4)});
            \foreach \xx in {6.15,4.35,2.55} {
                \fill[designgreen] (\xx,{2.60-0.60*sin(180*(\xx-0.75)/5.4)+0.15*(\xx-0.75)/5.4}) circle (1.5pt);
                \fill[designorange] (\xx,{0.90+0.45*(1-(\xx-0.75)/5.4)^2+0.30*sin(180*(\xx-0.75)/5.4)}) circle (1.5pt);
            }
            \node[right, text=designgreen] at (6.22,2.75) {$\mathbf{x}_a(t_0)$};
            \node[right, text=designorange] at (6.22,0.90) {$\mathbf{x}_b(t_0)$};
            \node[right] at (0.97,2.75) {$t_f$};
            \node[right] at (0.97,1.13) {$t_f$};
            \node at (4.05,-0.25)
                {(b) A common arrival time along different paths.};
        \end{scope}
    \end{tikzpicture}
    \caption{Schematic state-space views of the scheduled isochron: (a) tangent and normal perturbations in local state coordinates $(\xi_1,\xi_2)$; (b) nominal homing with an exact time-to-go coordinate.}
    \label{fig:isochron_geometry}
\end{figure}

In \Cref{fig:isochron_geometry}(a), the shaded bands indicate negative and positive timing-error regions. The infinitesimal perturbation $\delta\mathbf{x}$ has tangent and normal components. Since $\mathbf{p}\trans\delta\mathbf{x}_{\parallel}=0$, only $\delta\mathbf{x}_{\perp}$ contributes to $\delta e$ and the metric value $(\delta e)^2$. Panel (b) uses the exact time-to-go $\calT_{\mathrm{H}}$ and an internal coordinate $\eta$. Under nominal baseline homing, \eqref{eq:transport_pde} implies $\dot{\calT}_{\mathrm{H}}=-1$, so the illustrated trajectories pass through successive scheduled sets at $t_0<t_1<t_2<t_f$. The distinct paths share the arrival time $t_f$ by \eqref{eq:baseline_hitting_time}.

We next incorporate the acceleration limit through a timing funnel. Let $\rho$ be continuously differentiable on $[t_0,t_f)$, with $0<\rho(t)<s(t)$ and $\lim_{t\to t_f^-}\rho(t)=0$. Define
$\calF(t)=\{\mathbf{x}:\abs{e(\mathbf{x},t)}\leq\rho(t)\}$ and let $\ell
    \coloneqq
    F_{\calT}+\hat{\Delta}_{\calT}$ with $\abs{\tilde{\Delta}_{\calT}}
    \leq
    \bar{\delta}$. Among the commands satisfying the robust funnel inequalities, select the command closest to the baseline homing law:
    \begin{subequations}\label{eq:optimization}
        \begin{align}
    a_{\mathrm{cf}}
    \in&
    \arg\min_{\abs{a}\leq a_{\max}}
    \frac{1}{2}(a-a_{\mathrm{H}})^2
    \label{eq:filter_objective}
        \end{align}
The constraints account for the residual bound at the upper and lower funnel boundaries:
        \begin{align}
    B_{\calT}a
    \leq&
    -\kappa(e-\rho)-\ell-\bar{\delta}+\dot{\rho},
    \label{eq:filter_upper}
    \\
    B_{\calT}a
    \geq&
    -\kappa(e+\rho)-\ell+\bar{\delta}-\dot{\rho},
    \label{eq:filter_lower}
\end{align}
 \end{subequations}
where $\kappa(t)>0$. The optimization \eqref{eq:optimization} projects the baseline command onto the feasible acceleration interval and thereby defines a scalar command filter. Feasibility of the constraint interval determines whether the prescribed funnel is compatible with the available lateral acceleration.
\begin{lemma}
\label{prop:feasibility}
Let the lower and upper bounds in \eqref{eq:filter_lower} and \eqref{eq:filter_upper} be denoted by
\begin{align}
    \zeta_-
    \coloneqq&
    -\kappa(e+\rho)-\ell+\bar{\delta}-\dot{\rho},
    \nonumber\\
    \zeta_+
    \coloneqq&
    -\kappa(e-\rho)-\ell-\bar{\delta}+\dot{\rho}.
    \label{eq:zeta_bounds}
\end{align}
The filter \eqref{eq:filter_objective}--\eqref{eq:filter_lower}
is feasible if and only if
\begin{align}
    [\zeta_-,\zeta_+]
    \cap
    [-\abs{B_{\calT}}a_{\max},
      \abs{B_{\calT}}a_{\max}]
    \neq&
    \varnothing .
    \label{eq:interval_intersection}
\end{align}
Equivalently, $\zeta_-\leq\zeta_+,~
    \zeta_-\leq\abs{B_{\calT}}a_{\max},~
    -\abs{B_{\calT}}a_{\max}\leq\zeta_+$. If $\zeta_-\leq\zeta_+$ and $B_{\calT}\neq0$, the minimum required symmetric authority is
\begin{align}
    a_{\mathrm{req}}
    =&
    \frac{
    \max\{\zeta_-,-\zeta_+,0\}
    }{\abs{B_{\calT}}}.
    \label{eq:required_acceleration}
\end{align}
\end{lemma}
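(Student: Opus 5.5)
The plan is to change variables from $a$ to $u\coloneqq B_{\calT}a$, which turns \eqref{eq:filter_upper}--\eqref{eq:filter_lower} into the interval constraint $\zeta_-\leq u\leq\zeta_+$. Feasibility is a property of the constraint set only, since the objective \eqref{eq:filter_objective} is strictly convex and continuous on a closed interval; a minimizer therefore exists (and is unique) whenever the feasible set is nonempty. The whole claim thus reduces to describing when $\{a:\abs{a}\leq a_{\max},\ \zeta_-\leq B_{\calT}a\leq\zeta_+\}$ is nonempty.

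\textbf{Main case, $B_{\calT}\neq0$.} The map $a\mapsto B_{\calT}a$ is a bijection of $\R$. It sends $[-a_{\max},a_{\max}]$ onto $[-\abs{B_{\calT}}a_{\max},\abs{B_{\calT}}a_{\max}]$, regardless of the sign of $B_{\calT}$, because the interval is symmetric. So a feasible $a$ exists if and only if some $u$ lies in both $[\zeta_-,\zeta_+]$ and the image interval, which is \eqref{eq:interval_intersection}. Two closed intervals $[a_1,b_1]$ and $[a_2,b_2]$ intersect if and only if $a_1\leq b_1$, $a_2\leq b_2$, and $\max\{a_1,a_2\}\leq\min\{b_1,b_2\}$. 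Expanding the last inequality gives four pairwise comparisons. Two are the intrinsic conditions $\zeta_-\leq\zeta_+$ and $-\abs{B_{\calT}}a_{\max}\leq\abs{B_{\calT}}a_{\max}$, the latter automatic. The other two are the cross conditions $\zeta_-\leq\abs{B_{\calT}}a_{\max}$ and $-\abs{B_{\calT}}a_{\max}\leq\zeta_+$. This yields the stated equivalent triple.

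\textbf{Degenerate case, $B_{\calT}=0$.} This is the step needing care. Here the image interval collapses to $\{0\}$ and the constraints no longer involve $a$. Feasibility then means $\zeta_-\leq0\leq\zeta_+$, which is exactly \eqref{eq:interval_intersection} with $\abs{B_{\calT}}a_{\max}=0$. The triple still matches, since $\zeta_-\leq0$ and $0\leq\zeta_+$ together imply $\zeta_-\leq\zeta_+$. So the characterization holds uniformly in $B_{\calT}$.

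\textbf{Minimal symmetric authority.} Fix $\zeta_-\leq\zeta_+$ and $B_{\calT}\neq0$, and replace $a_{\max}$ by a free level $\alpha\geq0$. The ordering condition is already satisfied, so feasibility at level $\alpha$ reduces to $\alpha\abs{B_{\calT}}\geq\zeta_-$ and $\alpha\abs{B_{\calT}}\geq-\zeta_+$. Together with $\alpha\geq0$, the smallest admissible level is
$$\alpha=\max\{\zeta_-,-\zeta_+,0\}/\abs{B_{\calT}},$$
which is \eqref{eq:required_acceleration}. This infimum is attained because the constraints are closed. The $0$ term covers the case $0\in[\zeta_-,\zeta_+]$, where zero authority suffices. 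I expect no real obstacle beyond keeping the sign of $B_{\calT}$ out of the argument via the symmetry of the acceleration interval.
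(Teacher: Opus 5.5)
Your proposal is correct and follows essentially the same route as the paper. Both arguments identify the attainable values of $B_{\calT}a$ under $\abs{a}\leq a_{\max}$ as $[-b,b]$ with $b=\abs{B_{\calT}}a_{\max}$, read feasibility as overlap with $[\zeta_-,\zeta_+]$, unpack $\max\{\zeta_-,-b\}\leq\min\{\zeta_+,b\}$ into the stated triple, and minimize over the authority level; your separate treatment of $B_{\calT}=0$ and your existence remark are sound, and the paper covers that case implicitly because the attainable set is then $[-b,b]=\{0\}$.
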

\begin{proof}
Fix $(\mathbf{x},t)$ and set $b=\abs{B_{\calT}}a_{\max}$. Under the input bound in \eqref{eq:filter_objective}, the scalar quantity $B_{\calT}a$ ranges over $[-b,b]$. The constraints \eqref{eq:filter_upper}--\eqref{eq:filter_lower} and the definitions \eqref{eq:zeta_bounds} require $\zeta_-\leq B_{\calT}a\leq\zeta_+$. Thus, a feasible command exists if and only if the attainable and required intervals intersect, establishing \eqref{eq:interval_intersection}. Comparing the endpoints, the overlap condition takes the form $\max\{\zeta_-,-b\}\leq\min\{\zeta_+,b\}$. Since $b\geq0$, the endpoint inequality is equivalent to $\zeta_-\leq\zeta_+$, $\zeta_-\leq b$, and $-\zeta_+\leq b$. For a nonempty constraint interval, the smallest admissible value of $b$ is $\max\{\zeta_-,-\zeta_+,0\}$. Dividing by $\abs{B_{\calT}}$ when $B_{\calT}\neq0$ yields the minimum required acceleration in \eqref{eq:required_acceleration}.
\end{proof}
When $B_{\calT}=0$, lateral acceleration has zero first-order timing effectiveness irrespective of the feedback gain; feasibility then requires $0\in[\zeta_-,\zeta_+]$. The interval width obtained from \eqref{eq:zeta_bounds} is
$\zeta_+-\zeta_-=2(\kappa\rho+\dot{\rho}-\bar{\delta})$. 
Thus, excessive uncertainty or rapid funnel shrinkage can make the interval empty even for large $a_{\max}$.
\begin{corollary}
\label{cor:funnel_compatibility}
A necessary condition for robust funnel feasibility is
\begin{align}
\kappa(t)\rho(t)+\dot{\rho}(t)\geq&\ \bar{\delta}(t).
\label{eq:intrinsic_funnel_condition}
\end{align}
Condition \eqref{eq:intrinsic_funnel_condition} is independent of the available acceleration authority. Once this condition holds, the remaining limitation is expressed by the steering margin
\begin{align}
\mathfrak{m}(t)\coloneqq&\ |B_{\calT}|a_{\max}
-\max\{\zeta_-,-\zeta_+,0\}.
\label{eq:authority_margin}
\end{align}
The inequalities $\mathfrak{m}(t)\geq0$ and $\zeta_-\leq\zeta_+$ are equivalent to pointwise feasibility when $B_{\calT}\neq0$. For the power-law funnel $\rho=\rho_0[(t_h-t)/(t_h-t_0)]^p$ and $\kappa=c/(t_h-t)$, the identity $\kappa\rho+\dot{\rho}=(c-p)\rho/(t_h-t)$ makes $c\geq p$ necessary in the nominal case, while persistent uncertainty imposes the stronger lower bound in \eqref{eq:intrinsic_funnel_condition}.
\end{corollary}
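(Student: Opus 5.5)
The corollary follows directly from \Cref{prop:feasibility}, so the plan is to read each claim off the interval characterization \eqref{eq:interval_intersection} and its endpoint form.

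\textbf{Necessary condition.} By \Cref{prop:feasibility}, feasibility requires $\zeta_-\leq\zeta_+$. Subtracting the two expressions in \eqref{eq:zeta_bounds}, the terms $-\kappa e$ and $-\ell$ cancel, leaving
\[
\zeta_+-\zeta_-=2(\kappa\rho+\dot{\rho}-\bar{\delta}).
\]
Hence $\zeta_-\leq\zeta_+$ is exactly \eqref{eq:intrinsic_funnel_condition}. This difference contains neither $a_{\max}$ nor $B_{\calT}$, which proves the independence claim.

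\textbf{Steering margin.} When $B_{\calT}\neq0$, the remaining two endpoint conditions, $\zeta_-\leq|B_{\calT}|a_{\max}$ and $-\zeta_+\leq|B_{\calT}|a_{\max}$, can be merged. Since $|B_{\calT}|a_{\max}\geq0$, adding $0$ inside the maximum is harmless, so together they read $\max\{\zeta_-,-\zeta_+,0\}\leq|B_{\calT}|a_{\max}$, that is, $\mathfrak{m}(t)\geq0$. Combined with $\zeta_-\leq\zeta_+$, this is precisely the equivalent triple of inequalities in \Cref{prop:feasibility}, so the pair $\{\mathfrak{m}\geq0,\ \zeta_-\leq\zeta_+\}$ is equivalent to pointwise feasibility. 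The margin is simply $|B_{\calT}|(a_{\max}-a_{\mathrm{req}})$ from \eqref{eq:required_acceleration}.

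\textbf{Power-law funnel.} Set $u=(t_h-t)/(t_h-t_0)$, so $\rho=\rho_0u^p$. Differentiating gives $\dot{\rho}=-p\rho_0u^{p-1}/(t_h-t_0)=-p\rho/(t_h-t)$. Adding $\kappa\rho=c\rho/(t_h-t)$ yields the identity $\kappa\rho+\dot{\rho}=(c-p)\rho/(t_h-t)$. In the nominal case $\bar{\delta}=0$, so \eqref{eq:intrinsic_funnel_condition} requires $(c-p)\rho/(t_h-t)\geq0$. Because $\rho>0$ and $t<t_h$, this forces $c\geq p$. With persistent uncertainty $\bar{\delta}>0$, the condition becomes $(c-p)\rho\geq\bar{\delta}(t_h-t)$, which is strictly stronger.

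\textbf{Main obstacle.} The only delicate point is the meaning of ``independent of the available authority'': the claim is that the condition is algebraically free of $a_{\max}$ and $B_{\calT}$. The degenerate case $B_{\calT}=0$ must be excluded from the margin equivalence, where it is instead handled by the condition $0\in[\zeta_-,\zeta_+]$ noted after \Cref{prop:feasibility}.
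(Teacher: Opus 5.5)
Your proposal is correct and follows the paper's own reasoning: the paper gets the necessary condition from the interval width $\zeta_+-\zeta_-=2(\kappa\rho+\dot{\rho}-\bar{\delta})$ stated after \Cref{prop:feasibility}, reads the steering-margin equivalence off that lemma's endpoint form, and obtains the power-law identity by direct differentiation, just as you do. One small sharpening is that you do not need to exclude $B_{\calT}=0$: in that case $\mathfrak{m}\geq0$ reduces to $\zeta_-\leq0\leq\zeta_+$, which is exactly the degenerate feasibility condition.
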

With vector actuation, the timing channel remains scalar. The minimum-norm timing correction acts along the input sensitivity, and a Euclidean input bound determines the attainable timing action.
\begin{corollary}
\label{cor:vector_input}
Consider $\dot{\mathbf{x}}=\mathbf{f}(\mathbf{x},t)+G(\mathbf{x},t)\mathbf{u}+\mathbf{d}(\mathbf{x},t)$ with $\mathbf{u}\in\R^m$, $\|\mathbf{u}\|_2\leq u_{\max}$, and a baseline command $\mathbf{u}_{\mathrm{H}}$. Define the predictor defect along the baseline vector flow by
$\varepsilon_{\calT}=\partial_t\calT+\nabla_{\mathbf{x}}\calT\trans(\mathbf{f}+G\mathbf{u}_{\mathrm{H}})+1$, and define the row vector $B_{\calT}\coloneqq\nabla_{\mathbf{x}}\calT\trans G$.
Where $B_{\calT}\neq0$, the minimum-norm timing correction
\begin{align}
\mathbf{u}_{\mathrm{IT}}^{\star}
=&-\frac{B_{\calT}\trans}{\|B_{\calT}\|_2^2}
\left(\varepsilon_{\calT}+\hat{\Delta}_{\calT}+k(t)e\right)
\label{eq:vector_timing_correction}
\end{align}
recovers the timing dynamics \eqref{eq:contracting_error}, so \Cref{thm:contraction} applies. The attainable scalar timing action satisfies $B_{\calT}\mathbf{u}\in[-\|B_{\calT}\|_2u_{\max},\|B_{\calT}\|_2u_{\max}]$.
The feasibility criteria in \Cref{prop:feasibility,cor:funnel_compatibility} extend by replacing $|B_{\calT}|a_{\max}$ with $\|B_{\calT}\|_2u_{\max}$. The extension includes three-dimensional point-mass guidance with two lateral steering components. The reader may refer to \cite{sinha2021wide,sinha2021event} and references therein for details on impact time guidance in three-dimensional settings.
\end{corollary}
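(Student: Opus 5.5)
The plan is to repeat the scalar derivation of the timing channel with a vector input, and then observe that every later argument depends on the input only through the scalar $B_{\calT}\mathbf{u}$.

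\emph{Timing channel.} Differentiate $e=\calT-s$ along $\dot{\mathbf{x}}=\mathbf{f}+G\mathbf{u}+\mathbf{d}$. Since $\dot{s}=-1$ and $\Delta_{\calT}=\nabla_{\mathbf{x}}\calT\trans\mathbf{d}$, the vector defect definition gives
\begin{align*}
\dot{e}=\varepsilon_{\calT}+B_{\calT}(\mathbf{u}-\mathbf{u}_{\mathrm{H}})+\Delta_{\calT}.
\end{align*}
This is the exact analogue of \eqref{eq:timing_channel_relative}, with $B_{\calT}$ now a row vector.

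\emph{Minimum-norm correction.} Set $\mathbf{u}=\mathbf{u}_{\mathrm{H}}+\mathbf{u}_{\mathrm{IT}}^{\star}$. The identity $B_{\calT}B_{\calT}\trans=\|B_{\calT}\|_2^2$ gives $B_{\calT}\mathbf{u}_{\mathrm{IT}}^{\star}=-(\varepsilon_{\calT}+\hat{\Delta}_{\calT}+ke)$, and hence $\dot{e}=-ke+\tilde{\Delta}_{\calT}$, which is \eqref{eq:contracting_error}. To justify the name, note that any $\mathbf{v}$ with $B_{\calT}\mathbf{v}=c$ splits orthogonally into a component along $B_{\calT}\trans$ and a component in $\ker B_{\calT}$. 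The correction is exactly the first component, so a Pythagorean argument (equivalently Cauchy--Schwarz) shows it has the least norm.

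\emph{Transfer of \Cref{thm:contraction}.} The proof of \Cref{thm:contraction} uses only the scalar identity $\partial_t e+\nabla e\trans\mathbf{F}_{\mathrm{cl}}=-ke$ in the nominal case, where $\mathbf{F}_{\mathrm{cl}}$ is now the closed-loop field under the vector command. It also uses \Cref{ass:coordinate} for level-set regularity. Both remain available, so \eqref{eq:metric_identity}, the kernel characterization, coordinate invariance, and \eqref{eq:robust_contraction_bound} follow verbatim. The one point to check is well-definedness: the theorem needs the command to be $C^1$ on a tube around $\calM_{\mathrm{IT}}$, so $B_{\calT}\neq0$ must hold on that tube, not only pointwise.

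\emph{Attainable set, the main step.} I would show that the image of the Euclidean ball $\{\|\mathbf{u}\|_2\leq u_{\max}\}$ under the linear functional $\mathbf{u}\mapsto B_{\calT}\mathbf{u}$ is exactly $[-\|B_{\calT}\|_2u_{\max},\|B_{\calT}\|_2u_{\max}]$.
\begin{itemize}
\item Cauchy--Schwarz gives the inclusion in this interval.
\item The points $\mathbf{u}=\lambda u_{\max}B_{\calT}\trans/\|B_{\calT}\|_2$ with $\lambda\in[-1,1]$ attain every value in it, using connectedness of the ball.
\end{itemize}
Equality of the sets, not just inclusion, is what carries the ``if and only if'' of \Cref{prop:feasibility}. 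The proof of that lemma used only that $B_{\calT}a$ ranges over $[-b,b]$, so with $b=\|B_{\calT}\|_2u_{\max}$ the interval-intersection criterion, the required authority, and the margin \eqref{eq:authority_margin} carry over with the stated replacement. One caveat: the filter objective becomes $\tfrac12\|\mathbf{u}-\mathbf{u}_{\mathrm{H}}\|^2$ subject to two half-space constraints and a ball. This is a convex program whose feasibility, which is all the corollary claims, is governed by the scalar interval. Condition \eqref{eq:intrinsic_funnel_condition} does not involve the authority at all and is unchanged.

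\emph{Three-dimensional case.} Take a point-mass model with $\mathbf{u}\in\R^2$ the two lateral acceleration components and $\|\mathbf{u}\|_2\leq a_{\max}$. This is an instance of the affine form above, so the extension is immediate.
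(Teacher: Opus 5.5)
Your proposal is correct and follows essentially the same route as the paper, which justifies this corollary inline: substitute $\mathbf{u}_{\mathrm{IT}}^{\star}$ into the vector timing channel to recover \eqref{eq:contracting_error}, then use Cauchy--Schwarz to identify the attainable timing action, so that the argument of \Cref{prop:feasibility} goes through with $b=\|B_{\calT}\|_2u_{\max}$. Your additions fill in details the paper leaves implicit: the orthogonal-decomposition proof of minimality, and the explicit attainment of the whole interval, which is what the ``if and only if'' actually needs.
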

Terminal calibration connects the feasible timing funnel to physical capture, as discussed next.
\begin{theorem}
\label{thm:first_hitting}
Suppose \Cref{ass:coordinate} holds,
$\mathbf{x}(t_0)\in\calF(t_0)$, and
\eqref{eq:interval_intersection} holds for every state visited in
$\calF(t)$, $t<t_f$. Then
\eqref{eq:optimization}
makes $\calF(t)$ robustly forward invariant for every mismatch satisfying
$\abs{\tilde{\Delta}_{\calT}}\leq\bar{\delta}$. The first capture occurs at the prescribed absolute time $t_f$.
\end{theorem}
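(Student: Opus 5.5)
The proof splits into two parts. First we show that the funnel $\calF(t)$ is robustly forward invariant, using a boundary (Nagumo-type) comparison argument on the scalar timing error $e$. Then we convert funnel containment into first capture at $t_f$, using the terminal calibration of \Cref{ass:coordinate}.

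\emph{Invariance.} By hypothesis \eqref{eq:interval_intersection} holds at every visited state, so \Cref{prop:feasibility} guarantees that \eqref{eq:optimization} has a feasible point. The problem has a strictly convex objective over a closed interval, so the projection $a_{\mathrm{cf}}$ is unique. Under the closed-loop command, \eqref{eq:timing_channel_total} gives $\dot e=\ell+B_{\calT}a_{\mathrm{cf}}+\tilde{\Delta}_{\calT}$, where $\ell=F_{\calT}+\hat{\Delta}_{\calT}$.
\begin{itemize}
\item From \eqref{eq:filter_upper} and $\tilde{\Delta}_{\calT}\leq\bar{\delta}$ we get $\dot e-\dot\rho\leq-\kappa(e-\rho)$.
\item From \eqref{eq:filter_lower} we likewise get $\dot e+\dot\rho\geq-\kappa(e+\rho)$.
\end{itemize}
Set $w_+=e-\rho$ and $w_-=-e-\rho$. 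These satisfy $\dot w_\pm\leq-\kappa w_\pm$, and $w_\pm(t_0)\leq0$ because $\mathbf{x}(t_0)\in\calF(t_0)$. The comparison lemma, applied with the transition function $\exp(-\int\kappa)$, then yields $w_\pm(t)\leq0$, so $\abs{e(t)}\leq\rho(t)$ for all $t<t_f$ on the maximal interval of existence.

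This step needs a well-defined closed-loop solution, which is the technical point I expect to be the main obstacle. The projection is continuous, because it is the composition of a clip with continuous interval endpoints, but it need not be Lipschitz when the constraint interval degenerates. I would handle this in one of two ways: invoke existence of Carathéodory/Filippov solutions, or use the standing extension clause of \Cref{ass:coordinate}, which says solutions extend continuously to $t_f$ and remain in $\calD$. Robustness follows because the argument used only $\abs{\tilde{\Delta}_{\calT}}\leq\bar{\delta}$.

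\emph{First capture.} For $t<t_f$, funnel containment gives $\calT(\mathbf{x}(t),t)\geq s(t)-\rho(t)>0$, since $\rho<s$. By \Cref{ass:coordinate}, $\calT>0$ exactly off $\calC$ and $\calT=0$ on $\calC$, so $\mathbf{x}(t)\notin\calC$ on $[t_0,t_f)$.

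Next, $\calT(\mathbf{x}(t),t)\leq s(t)+\rho(t)\to0$ as $t\to t_f^-$. Hence for $t$ close enough to $t_f$ the state lies in the sublevel set \eqref{eq:terminal_sublevel}, and therefore in $\mathcal{N}_{\calC}$. There the lower calibration \eqref{eq:terminal_calibration} gives $q(\mathbf{x}(t))\leq\underline{\alpha}^{-1}(s+\rho)\to0$; on the range of $\underline{\alpha}$, the inverse $\underline{\alpha}^{-1}$ is class-$\mathcal{K}$.

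Solutions extend continuously to $t_f$ and $\calC$ is closed, so $q(\mathbf{x}(t_f))=0$, that is, $\mathbf{x}(t_f)\in\calC$. This establishes \eqref{eq:problem_first_capture}.

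One subtlety remains: capture must not be sneaked in through an unbounded $\calT$, and we must confirm that $\calT>0$ really characterises non-capture. Both follow directly from the codomain $\R_{>0}$ of $\calT$ on $\calD\setminus\calC$ in \Cref{ass:coordinate}, so I would state that usage explicitly.
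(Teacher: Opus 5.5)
Your proposal is correct and follows essentially the same route as the paper. Both derive the comparison inequalities $\dot g_\pm\le-\kappa g_\pm$ for $g_\pm=\pm e-\rho$ from the two filter constraints and the worst-case mismatch, use $\calT\geq s-\rho>0$ to exclude capture before $t_f$, and combine the sublevel condition, terminal calibration, continuity, and closedness of $\calC$ to conclude $\mathbf{x}(t_f)\in\calC$. Your well-posedness aside goes beyond the paper, which simply relies on the extension clause of \Cref{ass:coordinate}; your claim that the projection is continuous can fail at states where $B_{\calT}=0$ and $\zeta_-$ or $\zeta_+$ vanishes, because the endpoints $\zeta_\pm/B_{\calT}$ are not continuous there, so falling back on that clause, as you suggest, is the sound choice.
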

\begin{proof}
By \Cref{prop:feasibility}, the feasibility condition \eqref{eq:interval_intersection} ensures that the optimization problem \eqref{eq:optimization} admits a command $a_{\mathrm{cf}}$ satisfying $\abs{a_{\mathrm{cf}}}\leq a_{\max}$. Set $g_+=e-\rho$. Applying the upper constraint \eqref{eq:filter_upper} to the timing channel \eqref{eq:timing_channel_total} yields
$\dot{g}_+\leq-\kappa g_+$ under the worst positive mismatch. Writing the lower boundary as $g_-=-e-\rho$, the constraint \eqref{eq:filter_lower} implies
$\dot{g}_-\leq-\kappa g_-$ under the worst negative mismatch. The initial funnel membership implies $g_+(t_0),g_-(t_0)\leq0$. Scalar comparison preserves both inequalities and establishes funnel invariance. For
$t<t_f$, the error definition \eqref{eq:timing_error} and the funnel bound imply $\calT(\mathbf{x}(t),t)
    =s(t)+e(t)
    \geq s(t)-\rho(t)>0$. Since $\calT=0$ on $\calC$ by \Cref{ass:coordinate}, capture cannot occur before $t_f$. As $t\to t_f^-$,
$\calT\to0$. The sublevel condition \eqref{eq:terminal_sublevel} places the trajectory in $\mathcal{N}_{\calC}$ for all sufficiently late times, and terminal calibration in \eqref{eq:terminal_calibration} implies $q(\mathbf{x}(t))\to0$. The continuous extension in \Cref{ass:coordinate} and closedness of $\calC$ imply
$\mathbf{x}(t_f)\in\calC$.
\end{proof}
The first-capture guarantee requires a coordinate satisfying \Cref{ass:coordinate}. We use the transport equation \eqref{eq:transport_pde} to construct an exact PN coordinate and examine alternative predictors under the same actuator model. Consider a stationary target,
$V_{\mathrm{P}}=V$, and the velocity-normal PN baseline
\begin{align}
    a_{\mathrm{H}}
    =&
    N V_{\mathrm{P}}\dot{\theta}
    =
    -\frac{NV^2}{r}\sin{\sigma}.
    \label{eq:pn_baseline}
\end{align}
We seek an exact homogeneous time-to-go map of the form
\begin{align}
    \calT_{\mathrm{H}}(r,\sigma)
    =&
    \frac{r}{V}h_N(\sigma).
    \label{eq:exact_pn_tgo}
\end{align}
Substitution into \eqref{eq:transport_pde} yields the scalar equation
\begin{align}
    (N-1)\sin{\sigma}\,h_N'(\sigma)
    +\cos{\sigma}\,h_N(\sigma)
    =&
    1;~
    h_N(0)=1.
    \label{eq:h_ode}
\end{align}
For $N>1$, \eqref{eq:h_ode} can be tabulated on any compact capturable subset
$\abs{\sigma}\leq\sigma_{\max}<\pi$.
The lateral timing coefficient of $\calT_{\mathrm{H}}$ is
\begin{align}
    B_{\mathrm{H}}
    =&
    \frac{r}{V^2}h_N'(\sigma).
    \label{eq:exact_pn_B}
\end{align}
The regular expansion
\begin{align}
    h_N(\sigma)
    =&
    1+\frac{\sigma^2}{4N-2}
    +\mathcal{O}(\sigma^4)
    \label{eq:h_expansion}
\end{align}
recovers the leading-order behavior of the large-heading approximation with
$\sin^2{\sigma}/(4N-2)$ used in the guidance design of \cite{sinha2020super}. The large-heading approximation differs from the exact solution of
\eqref{eq:h_ode}.
\begin{table*}[!ht]
\caption{Time-to-go coordinates and the corresponding coefficients of velocity-normal acceleration in $\dot{e}=F_i+B_i a_{\mathrm{P}}$.}
\centering
\footnotesize
\setlength{\tabcolsep}{3pt}
\renewcommand{\arraystretch}{1.08}
\begin{tabular}{@{}p{0.475\linewidth}
    >{\centering\arraybackslash}p{0.475\linewidth}@{}}
\toprule
\multicolumn{1}{c}{\textbf{Time-to-go map} $\calT_i$}
&
\multicolumn{1}{c}{\textbf{Coefficient} $B_i$ \textbf{of} $a_{\mathrm{P}}$ \textbf{in} $\dot{e}$}
\\
\midrule

Conservative, based on closing speed:~~
$\displaystyle
\calT_{\mathrm{C}}=-\frac{r}{V_r}$
&
$\displaystyle
B_{\mathrm{C}}=\frac{r\sin{\sigma}}{V_r^2}$
\\[2pt]

Deviated pursuit \cite{sinha2025deviated}:~~
$\displaystyle
\calT_{\mathrm{DP}}
=
\frac{r\sec{\sigma}}{\Upsilon}
\left[
V_{\mathrm{P}}
+V_{\mathrm{T}}\cos{\left(\psi_{\mathrm{T}}+\sigma\right)}
\right]$
&
$\displaystyle
B_{\mathrm{DP}}
=
-\frac{rV_{\theta}\sec^2{\sigma}}
{V_{\mathrm{P}}\Upsilon}$
\\[2pt]

TPN-inspired \cite{kumar2022true}:~~
$\displaystyle
\calT_{\mathrm{TPN}}
=
-\frac{r\left(V_r+2c_{\mathrm{T}}\right)}
{V_{\theta}^2+V_r^2+2c_{\mathrm{T}}V_r}$
&
$\displaystyle
\begin{aligned}
B_{\mathrm{TPN}}
=&
\frac{r\Big[
\sin{\sigma}
\big(
\left(V_r+2c_{\mathrm{T}}\right)^2-V_{\theta}^2
\big)-2\left(V_r+2c_{\mathrm{T}}\right)
V_{\theta}\cos{\sigma}
\Big]}{
\left(
V_{\theta}^2+V_r^2+2c_{\mathrm{T}}V_r
\right)^2}
\end{aligned}$
\\[2pt]

Large-heading PN
\cite{kumar2015large,sinha2020super}:~~
$\displaystyle
\calT_{\mathrm{LPN}}
=
\frac{r}{V_{\mathrm{P}}}
\left(
1+\frac{\sin^2{\sigma}}{4N-2}
\right)$
&
$\displaystyle
B_{\mathrm{LPN}}
=
\frac{r\sin{\left(2\sigma\right)}}
{V_{\mathrm{P}}^2\left(4N-2\right)}$
\\

\bottomrule
\end{tabular}
\label{tab:tgo_atlas}
\end{table*}

The PN geometry also determines the rate at which lateral timing authority vanishes near capture.
\begin{lemma}
\label{prop:pn_authority_order}
For a nontrivial stationary-target PN trajectory with $N>1$ that approaches collision-course capture,
\begin{align}
\sin{\sigma}=&\ C r^{N-1},
\label{eq:pn_sigma_range}
\end{align}
for a trajectory-dependent constant $C$. The exact PN timing coefficient in \eqref{eq:exact_pn_B} satisfies
\begin{align}
|B_{\mathrm{H}}|=&\ \Theta(r^N)=\Theta(\calT_{\mathrm{H}}^N).
\label{eq:pn_B_order}
\end{align}
If direct prescribed-time inversion is continued to capture with $k=c/s$, $e=\mathcal{O}(s^c)$, and vanishing predictor residual, then
\begin{align}
a_{\mathrm{IT}}=&\ \mathcal{O}(s^{c-N-1}).
\label{eq:pn_command_order}
\end{align}
Thus, $c\geq N+1$ is required for bounded nominal timing correction and $c>N+1$ makes the timing correction vanish at capture.
\end{lemma}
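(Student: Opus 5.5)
Under the stationary-target PN baseline \eqref{eq:pn_baseline}, I would first write the closed-loop polar kinematics for $r$ and $\sigma$. With a stationary target, $V_r=-V\cos\sigma$ and $V_\theta=-V\sin\sigma$, so $\dot\theta=-V\sin\sigma/r$. Substituting $a_{\mathrm{P}}=NV\dot\theta$ into \eqref{eq:polar_4} gives $\dot\sigma=(N-1)\dot\theta=-(N-1)V\sin\sigma/r$. Away from $\sigma=0$ the trajectory is nontrivial, and for a collision-course approach $\cos\sigma>0$, so $\dot r<0$ and $r$ can serve as the independent variable. This gives
\begin{align*}
\frac{d\sigma}{dr}=\frac{(N-1)\sin\sigma}{r\cos\sigma},\qquad
\frac{d(\sin\sigma)}{\sin\sigma}=(N-1)\frac{dr}{r}.
\end{align*}
Integrating yields $\sin\sigma=Cr^{N-1}$, which is \eqref{eq:pn_sigma_range}. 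The constant is $C\neq0$ by nontriviality. For $N>1$ this forces $\sigma\to0$ as $r\to0$, which is consistent with collision-course capture.

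Next I would establish \eqref{eq:pn_B_order}. Differentiating the expansion \eqref{eq:h_expansion} gives $h_N'(\sigma)=\sigma/(2N-1)+\mathcal{O}(\sigma^3)$. Justifying this needs regularity of $h_N$ at $\sigma=0$. I would get it from \eqref{eq:h_ode}: $\sigma=0$ is a regular singular point, and the smooth solution with $h_N(0)=1$ has an even Taylor series. Its quadratic coefficient follows by matching $\sigma^2$ terms, $2(N-1)a+a-\tfrac12=0$, so $a=1/(4N-2)$. Hence $|h_N'(\sigma)|=\Theta(|\sigma|)=\Theta(|\sin\sigma|)=\Theta(r^{N-1})$ near capture. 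Then \eqref{eq:exact_pn_B} gives $|B_{\mathrm{H}}|=(r/V^2)\Theta(r^{N-1})=\Theta(r^N)$. Because $h_N(\sigma)\to1$, \eqref{eq:exact_pn_tgo} gives $\calT_{\mathrm{H}}=\Theta(r)$, which yields the second equality in \eqref{eq:pn_B_order}.

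For \eqref{eq:pn_command_order} I would use the exact coordinate, so $\varepsilon_{\calT}=0$ and the residual vanishes. The timing bias is then $a_{\mathrm{IT}}=-ke/B_{\mathrm{H}}$, with $k=c/s$ and $e=\mathcal{O}(s^c)$, so the numerator is $\mathcal{O}(s^{c-1})$. Along the timing-aligned trajectory, $\calT_{\mathrm{H}}=s+e=s(1+\mathcal{O}(s^{c-1}))=\Theta(s)$ when $c>1$. Combined with \eqref{eq:pn_B_order}, this gives $|B_{\mathrm{H}}|=\Theta(s^N)$, hence $a_{\mathrm{IT}}=\mathcal{O}(s^{c-1-N})$. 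The thresholds $c\geq N+1$ (bounded correction) and $c>N+1$ (vanishing correction) follow by reading off the exponent.

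The main obstacle is that the estimate \eqref{eq:pn_B_order} is derived along the nominal PN trajectory, whereas the corrected trajectory is perturbed by $a_{\mathrm{IT}}$. I would address this with a self-consistency argument. If $a_{\mathrm{IT}}=\mathcal{O}(s^{c-N-1})$ with $c\geq N+1$, the perturbation of $d(\sin\sigma)/dr$ is integrable near $r=0$. The relation $\sin\sigma=\Theta(r^{N-1})$, with a perturbed constant, then persists. I would close this with a bootstrap or continuation argument on the last interval before $t_f$. I would also state explicitly that the order estimate \eqref{eq:pn_command_order} is an upper bound, $\mathcal{O}(\cdot)$, so only the lower bound $|B_{\mathrm{H}}|\gtrsim s^N$ is actually needed.
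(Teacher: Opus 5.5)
Your first three steps are essentially the paper's proof. You use the same reduction to $d\sigma/dr=(N-1)\tan\sigma/r$, the same differentiated expansion $h_N'(\sigma)=\sigma/(2N-1)+\mathcal{O}(\sigma^3)$ inserted into \eqref{eq:exact_pn_B}, and the same substitution into \eqref{eq:ideal_command} with $k=c/s$. Your checks that $h_N$ is smooth and even at $\sigma=0$, and that $\calT_{\mathrm{H}}=s+e=\Theta(s)$ for $c>1$, supply details the paper leaves implicit.

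The paper's proof moves the nominal-PN order of $B_{\mathrm{H}}$ onto the corrected trajectory without comment. You are right to flag that step, but the bootstrap you sketch for it does not close. Along the corrected flow, $\frac{d}{dr}\bigl(r^{1-N}\sin\sigma\bigr)=-V^{-2}r^{1-N}a_{\mathrm{IT}}$. The homogeneous solution $r^{N-1}$ itself vanishes at capture, so integrability of the forcing $a_{\mathrm{IT}}/V^2$ is not the relevant condition. What you need is $\int_0 r^{1-N}\abs{a_{\mathrm{IT}}}\,dr<\infty$. Your hypothesis $a_{\mathrm{IT}}=\mathcal{O}(s^{c-N-1})$ with $r=\Theta(s)$ guarantees this only for $c>2N-1$, so the range $N+1\leq c\leq 2N-1$ (nonempty once $N\geq2$) is not covered. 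Even when the integral converges, the limiting constant may be zero. The argument then yields at best $\sin\sigma=\mathcal{O}(r^{N-1})$, whereas, as you noted yourself, the lower bound on $\abs{B_{\mathrm{H}}}$ is the one required.

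A non-circular route uses the identity $B_{\mathrm{H}}a_{\mathrm{IT}}=-ke$ that the inversion enforces. With $W\coloneqq r^{2-2N}\sin^2\sigma$,
\[
\frac{dW}{dr}=-\frac{2}{V^2}\,r^{2-2N}a_{\mathrm{IT}}\sin\sigma=\frac{2ke}{r^{2N-1}}\,\frac{\sin\sigma}{h_N'(\sigma)},
\qquad
\frac{\sin\sigma}{h_N'(\sigma)}\to 2N-1>0 \text{ as } \sigma\to0 .
\]
With vanishing residual, $e=e(t_0)(s/s_0)^c$ exactly, where $s_0\coloneqq t_f-t_0$, so $e$ keeps its initial sign.

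If $e(t_0)\leq0$, then $W$ does not decrease as $r$ decreases. Hence $\abs{\sin\sigma}\geq W(r_1)^{1/2}r^{N-1}$ below any range $r_1$ at which $\sigma\neq0$, for every $c$, and \eqref{eq:pn_command_order} follows as you wrote.

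If $e(t_0)>0$, then $\abs{dW/dr}=\Theta(r^{c-2N})$ and $W$ decreases toward capture. For $c\leq 2N-1$ this is not integrable, so $W$ reaches zero at positive range. The orbit then hits $\sigma=0$, where $B_{\mathrm{H}}=0$, before $t_f$, contrary to the continuation hypothesis. For $c>2N-1$ you must additionally assume $\lim_{r\to0}W>0$; otherwise $\abs{B_{\mathrm{H}}}=o(r^N)$ and the needed lower bound fails.

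This nondegeneracy condition for late arrivals is what your bootstrap is missing. The paper's direct substitution does not address it either.
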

\begin{proof}
For the stationary-target PN baseline in \eqref{eq:pn_baseline}, the polar dynamics \eqref{eq:polar_1} and \eqref{eq:polar_4} reduce to $\dot{r}=-V\cos{\sigma}$ and $\dot{\sigma}=-(N-1)V\sin{\sigma}/r$. Hence, $d\sigma/dr=(N-1)\tan{\sigma}/r$, and integration establishes \eqref{eq:pn_sigma_range}. The expansion \eqref{eq:h_expansion} implies $h_N'(\sigma)=\sigma/(2N-1)+\mathcal{O}(\sigma^3)$. Substituting the derivative expansion into \eqref{eq:exact_pn_B} and using $\sigma=\Theta(r^{N-1})$ from \eqref{eq:pn_sigma_range}, one obtains $B_{\mathrm{H}}=\Theta(r^N)$. Since \eqref{eq:exact_pn_tgo} has the terminal form $\calT_{\mathrm{H}}=(r/V)(1+o(1))$, the two orders in \eqref{eq:pn_B_order} agree. Finally, substituting \eqref{eq:pn_B_order} and the assumed timing-error order into the nominal bias in \eqref{eq:ideal_command} with $k=c/s$ produces \eqref{eq:pn_command_order}.
\end{proof}
\Cref{tab:tgo_atlas} collects established time-to-go predictors and their input coefficients under the plant \eqref{eq:interceptor_kinematics}, where we write
$\psi_{\mathrm{T}}=\gamma_{\mathrm{T}}-\theta$ and
$\Upsilon=V_{\mathrm{P}}^2-V_{\mathrm{T}}^2$ for brevity. Each coefficient uses the velocity-normal actuation model, irrespective of the actuation assumed in the original derivation. For the closing-speed predictor, differentiating
$\calT_{\mathrm{C}}$ along \eqref{eq:polar_1}--\eqref{eq:polar_2} yields
\begin{align}
    \dot{e}
    =&
    \frac{V_{\theta}^2+r a_{\mathrm{T}r}}{V_r^2}
    +\frac{r\sin{\sigma}}{V_r^2}a_{\mathrm{P}}.
    \label{eq:conservative_channel}
\end{align}
The coefficient $B_{\mathrm{TPN}}$ likewise includes the radial and transverse sensitivities induced by a single velocity-normal command. A coefficient derived for LOS-normal actuation would correspond to a different physical input.

Each coordinate in \Cref{tab:tgo_atlas} has the factorized form
$r h(\cdot)$ with positive, bounded $h$ on a compact subset of the coordinate's domain, providing terminal calibration. Homing-flow compatibility is assessed by the defect \eqref{eq:predictor_defect}; the coefficient $B_i$ measures local timing authority; and \eqref{eq:interval_intersection} tests whether the available authority satisfies the filter constraints. The impact-time analysis accounts for all three properties. For example,
$B_{\mathrm{C}}=0$ at $\sigma=0$;
$B_{\mathrm{DP}}=0$ at $V_{\theta}=0$;
and $B_{\mathrm{LPN}}=0$ when
$\sin{(2\sigma)}=0$. Each surface corresponds to a loss of first-order timing authority and enters the nonsingularity analysis.

When the target has a constant velocity, the predicted interception point $\mathbf{p}_{\mathrm{I}}(t)
    =
    \mathbf{p}_{\mathrm{T}}(t)
    +\mathbf{v}_{\mathrm{T}}(t)(t_f-t)$ is stationary for a fixed $t_f$ and may replace the target in $r$, $\theta$, and $\sigma$. For a maneuvering target, one may instead use
$\hat{\mathbf{p}}_{\mathrm{T}}(t_f\mid t)$. The motion and error of the predicted interception point enter
$\varepsilon_{\calT}$ or $\Delta_{\calT}$. A prescribed-interception guarantee requires a target model, sufficient control authority, and terminally compatible prediction error. Arbitrary unknown target motion lies outside the scope of the present results.
\begin{remark}
\label{rem:autopilot}
The kinematics \eqref{eq:interceptor_kinematics} take achieved lateral acceleration as the input. A bounded static tracking error can be included in
$\tilde{\Delta}_{\calT}$. A first-order acceleration or turn-rate actuator makes the command-to-$e$ channel relative degree two and requires a dynamic contraction and filter design. Appending a lag to
\eqref{eq:ideal_command} changes the timing channel \eqref{eq:timing_channel_relative} and requires a new analysis. Range, LOS rate, the interceptor's heading, and the target's variables required by the selected map are presumed available, with measurement and estimation errors covered by the residual bound $\abs{\tilde{\Delta}_{\calT}}\leq\bar{\delta}$. Actuator and sensing dynamics remain topics for further study.
\end{remark}

If direct inversion is continued to $t_f$ and
$\abs{B_{\calT}}=\Theta(s^{\mu})$, a sufficient bounded-command condition is
$e=\mathcal{O}(s^{\mu+1})$ and
$\varepsilon_{\calT}+\tilde{\Delta}_{\calT}
=\mathcal{O}(s^{\mu})$. With $k=c/s$, the stated decay orders are ensured in the worst case by $c>\mu+1$ and a residual of order at least $s^{\mu}$. When $\mu>0$, a bounded residual can still lead to an unbounded acceleration command as timing authority vanishes.
\begin{remark}
The invariant funnel in \Cref{thm:first_hitting} bounds the timing error while the feasibility condition \eqref{eq:interval_intersection} holds. Near capture, $B_{\calT}\to0$ for many lateral-guidance maps as collision-course alignment develops. A staged architecture corrects the schedule while lateral steering retains timing authority, aligns the time-to-go coordinate at a preterminal instant, and returns to ordinary homing for the terminal phase.
\end{remark}
\begin{theorem}
\label{thm:handover}
Suppose $\calT=\calT_{\mathrm{H}}$ is the exact baseline time-to-go coordinate defined by \eqref{eq:baseline_hitting_time} and satisfies \eqref{eq:transport_pde}. Choose
$t_0<t_h<t_f$ and assume
$\abs{B_{\calT}}\geq b_h>0$ on $[t_0,t_h]$.
For $c>1$, apply
\begin{equation}
    a_{\mathrm{P}}
    =
    \begin{cases}
        \displaystyle
        a_{\mathrm{H}}
        -\frac{c\,e}{(t_h-t)B_{\calT}},
        & t<t_h,\\[2mm]
        a_{\mathrm{H}},
        & t\geq t_h.
    \end{cases}
    \label{eq:handover_guidance}
\end{equation}
If the command \eqref{eq:handover_guidance} satisfies $\abs{a_{\mathrm{P}}}\leq a_{\max}$ and
$\abs{e(t)}<s(t)$ during alignment, then
\begin{align}
    e(t)
    =&
    e(t_0)
    \left(
    \frac{t_h-t}{t_h-t_0}
    \right)^c,
    ~~ t<t_h,
    \label{eq:handover_error}
\end{align}
so $e(t)=0$ for $t\in[t_h,t_f]$, and the first nominal capture occurs at $t_f$. The timing correction is bounded and tends to zero at $t_h$.
\end{theorem}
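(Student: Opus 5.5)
The plan is to treat the two phases separately and glue them together at $t_h$ using the transport identity \eqref{eq:transport_pde}. The steps are: solve the timing error in closed form on the alignment phase, show it stays zero on the homing phase, deduce first capture from the calibration in \Cref{ass:coordinate}, and finally bound the timing bias as $t\to t_h$.

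\emph{Alignment phase $[t_0,t_h)$.} Since $\calT=\calT_{\mathrm{H}}$ is exact, the predictor defect \eqref{eq:predictor_defect} vanishes, $\varepsilon_{\calT}=0$. In the nominal setting $\mathbf{d}=0$, so $\Delta_{\calT}=0$. The relative timing channel \eqref{eq:timing_channel_relative} then reduces to $\dot e=B_{\calT}(a_{\mathrm{P}}-a_{\mathrm{H}})$. The lower bound $|B_{\calT}|\ge b_h$ on $[t_0,t_h]$ makes the first branch of \eqref{eq:handover_guidance} well defined, and substituting it gives the scalar linear equation $\dot e=-c\,e/(t_h-t)$. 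This is \eqref{eq:contracting_error} with $k=c/(t_h-t)$. Integrating directly, or reading it off from \eqref{eq:transition_function} of \Cref{thm:contraction} with $\bar\delta=0$, yields \eqref{eq:handover_error}, because $\int_{t_0}^t c/(t_h-\xi)\,d\xi=c\ln\frac{t_h-t_0}{t_h-t}$.

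Two side conditions must hold for this phase to make sense. First, the hypothesis $|e|<s$ gives $\calT=s+e>0$, so by \Cref{ass:coordinate} the state is not in $\calC$ and capture cannot occur during alignment. Second, the hypothesis $|a_{\mathrm{P}}|\le a_{\max}$, together with the continuation part of \Cref{ass:coordinate}, ensures the solution exists up to $t_h$. Letting $t\to t_h^-$ in \eqref{eq:handover_error} and using $c>0$ gives $e(t_h)=0$ by continuity.

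\emph{Homing phase $[t_h,t_f]$.} Here $a_{\mathrm{P}}=a_{\mathrm{H}}$ and $\varepsilon_{\calT}=0$, so \eqref{eq:transport_pde} gives $\dot{\calT}=-1=\dot s$, hence $\dot e=0$ and $e\equiv 0$. Equivalently, the state at $t_h$ lies on $\calM_{\mathrm{IT}}(t_h)$, and by the hitting-time definition \eqref{eq:baseline_hitting_time} the nominal homing trajectory first reaches $\calC$ exactly at $t_h+\calT_{\mathrm{H}}=t_h+s(t_h)=t_f$. For $t<t_f$ we have $\calT=s>0$, so there is no earlier capture. The statement $\mathbf{x}(t_f)\in\calC$ follows as in the end of the proof of \Cref{thm:first_hitting}, through \eqref{eq:terminal_sublevel}, \eqref{eq:terminal_calibration} and the closedness of $\calC$; it also follows directly from the infimum in \eqref{eq:baseline_hitting_time}.

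\emph{Bound on the timing correction.} This is the step needing the most care. From \eqref{eq:handover_error},
\[
|a_{\mathrm{IT}}|=\frac{c|e|}{(t_h-t)|B_{\calT}|}\le \frac{c|e(t_0)|}{b_h(t_h-t_0)^c}(t_h-t)^{c-1}.
\]
Because $c>1$, the right-hand side is bounded on $[t_0,t_h)$ and tends to zero as $t\to t_h$. The main subtlety is that the lower bound $|B_{\calT}|\ge b_h$ is assumed only up to $t_h$. This matters because the PN authority in \Cref{prop:pn_authority_order} decays like $\calT_{\mathrm{H}}^N$, which forces the handover to happen before $B_{\calT}$ vanishes. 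With that assumption in place, the estimate above controls the bias on the whole alignment interval. Finally, the command is continuous at $t_h$, since $a_{\mathrm{IT}}\to0$ there.
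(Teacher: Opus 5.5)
Your proposal is correct and follows essentially the same route as the paper. Both reduce the nominal channel to $\dot e=-c\,e/(t_h-t)$ and integrate it, and both bound the bias by $\frac{c\abs{e(t_0)}}{b_h(t_h-t_0)^c}(t_h-t)^{c-1}$ using $\abs{B_{\calT}}\geq b_h$. Both then place first capture at $t_f$ through $\calT_{\mathrm{H}}(\mathbf{x}(t_h),t_h)=t_f-t_h$ and the hitting-time definition, and use $\abs{e}<s$ to rule out capture during alignment.
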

\begin{proof}
Before handover, substituting \eqref{eq:transport_pde} and
\eqref{eq:handover_guidance} into the nominal timing channel \eqref{eq:timing_channel_relative} reduces the error dynamics to
$\dot{e}=-c e/(t_h-t)$, which integrates to
\eqref{eq:handover_error}. For the correction in \eqref{eq:handover_guidance}, the response \eqref{eq:handover_error} and the assumed bound $\abs{B_{\calT}}\geq b_h$ imply
\begin{align}
    \abs{a_{\mathrm{P}}-a_{\mathrm{H}}}
    \leq&
    \frac{c\abs{e(t_0)}}{b_h(t_h-t_0)^c}
    (t_h-t)^{c-1},
\end{align}
so the timing correction vanishes for $c>1$. At handover, \eqref{eq:handover_error} and \eqref{eq:timing_error} imply $\calT_{\mathrm{H}}(\mathbf{x}(t_h),t_h)=t_f-t_h$. The baseline first-hitting-time definition \eqref{eq:baseline_hitting_time} then places the first post-handover capture at $t_f$, while \eqref{eq:transport_pde} preserves $e=0$. During alignment, $\abs{e(t)}<s(t)$ ensures $\calT_{\mathrm{H}}>0$ by \eqref{eq:timing_error}, excluding earlier capture.
\end{proof}
The handover in \Cref{thm:handover} requires alignment with the scheduled isochron of the executing baseline homing flow. For an approximate predictor, a small numerical value of $e$ must be accompanied by a compatibility check through \eqref{eq:predictor_defect}: a nonzero $\varepsilon_{\calT}+\Delta_{\calT}$ after $t_h$ causes drift from the scheduled level set by \eqref{eq:timing_channel_relative}. An exact numerical time-to-go map, continued use of the filter, or a robust timing tube addresses the corresponding prediction mismatch.
\begin{figure}[!ht]
    \centering
    \begin{tikzpicture}[
        font=\footnotesize,
        block/.style={draw, rounded corners=1pt, align=center,
            inner sep=3pt, minimum height=0.8cm, fill=black!3},
        signal/.style={-{Latex[length=1.6mm,width=1.1mm]}, line width=0.55pt}
    ]
        \node[block, text width=2.35cm] (schedule) at (1.5,0)
            {Impact schedule\\$s=t_f-t$};
        \node[block, text width=2.7cm] (map) at (6.5,0)
            {Time-to-go map\\$\calT,~B_{\calT},~\varepsilon_{\calT}$};
        \node[block, text width=2.35cm] (homing) at (1.5,-1.45)
            {Baseline homing\\$a_{\mathrm{H}}(\mathbf{x},t)$};
        \node[block, text width=2.7cm] (error) at (6.5,-1.45)
            {Timing error\\$e=\calT-s$};
        \node[block, text width=5.05cm, minimum height=1.25cm]
            (command) at (4.15,-3.15)
            {Guidance command\\
             Timing regulation: \eqref{eq:ideal_command} or \eqref{eq:optimization}\\
             Terminal homing: \eqref{eq:handover_guidance}};
        \node[block, text width=3.25cm] (plant) at (4.15,-4.8)
            {Engagement dynamics\\\eqref{eq:affine_plant}};

        \draw[signal] (schedule.south) -- (1.5,-0.75)
            -- (4.4,-0.75) |- node[pos=0.85,above] {$s$} (error.west);
        \draw[signal] (map.south) -- node[left] {$\calT$} (error.north);
        \draw[signal] (homing.south) -- (1.5,-2.15)
            -- node[above] {$a_{\mathrm{H}}$} (2.6,-2.15)
            -- (2.6,-2.525);
        \draw[signal] (error.south) -- (6.5,-2.15)
            -- node[above] {$e$} (5.7,-2.15)
            -- (5.7,-2.525);
        \draw[signal] (map.east) -- (8.35,0) -- (8.35,-3.15)
            -- node[above] {$B_{\calT},\varepsilon_{\calT}$} (command.east);
        \draw[signal] (command.south) -- node[right] {$a_{\mathrm{P}}$} (plant.north);
        \draw[signal] (plant.west) -- (-0.1,-4.8) -- (-0.1,0.85)
            -- node[above] {$\mathbf{x}$} (6.5,0.85) -- (map.north);
        \fill (-0.1,-1.45) circle (1pt);
        \draw[signal] (-0.1,-1.45) -- (homing.west);
    \end{tikzpicture}
    \caption{Impact-time guidance and homing handover.}
    \label{fig:guidance_architecture}
\end{figure}
In \Cref{fig:guidance_architecture}, state feedback updates the baseline homing command and the time-to-go map. The error $e=\calT-(t_f-t)$ measures deviation from the assigned schedule, while $B_{\calT}$ and $\varepsilon_{\calT}$ quantify lateral timing sensitivity and mismatch with the baseline flow. Inversion or constrained projection determines $a_{\mathrm{P}}$, which drives the engagement dynamics. Under \Cref{thm:handover}, nominal alignment at $t_h$ allows baseline homing to complete interception at $t_f$.

\section{Simulations}\label{sec:simulations}
The numerical study examines timing regulation for stationary and maneuvering targets. Two cases distinguish regulation using the exact homing-compatible coordinate from evaluation of closed-form predictors: a stationary-target case with the large-heading PN (LPN) approximation and a maneuvering-target case with the deviated-pursuit predictor. The reported histories include trajectories, relative range, time-to-go, timing error, lateral acceleration, timing effectiveness, and predictor defect. The preterminal handover time $t_h$ marks the transition from timing correction to baseline homing. The steering margin $\mathfrak{m}$ in \eqref{eq:authority_margin} provides the online feasibility diagnostic described in \Cref{cor:funnel_compatibility}.

\begin{figure*}[h!]
	\centering
	\captionsetup[subfigure]{font=footnotesize,skip=1pt}
	
	\begin{subfigure}[t]{0.43\textwidth}
		\centering
		\includegraphics[width=\linewidth]{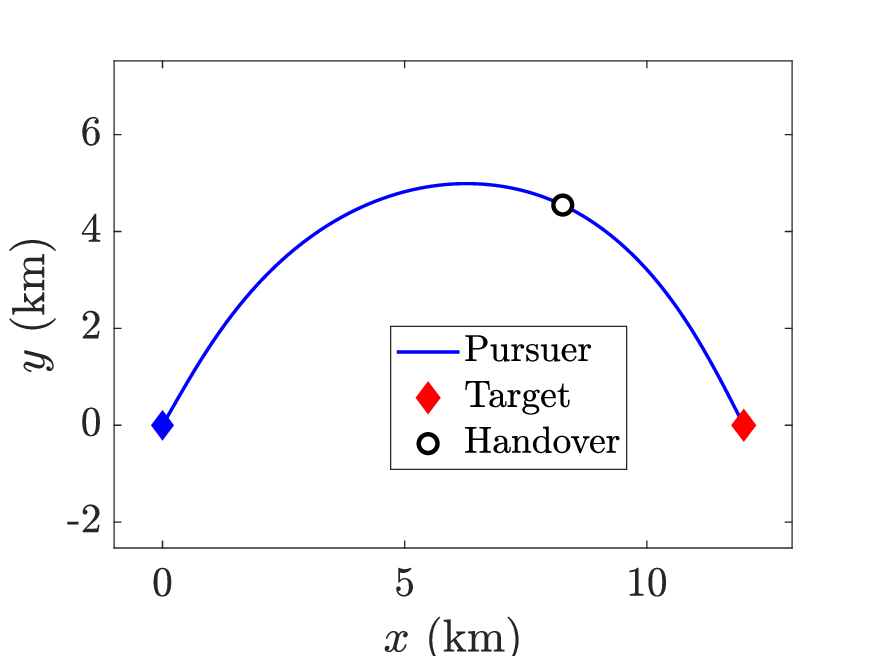}
		\caption{Interceptor's trajectory.}
		\label{fig:lpn_traj}
	\end{subfigure}
	\hspace{0.04\textwidth}
	\begin{subfigure}[t]{0.43\textwidth}
		\centering
		\includegraphics[width=\linewidth]{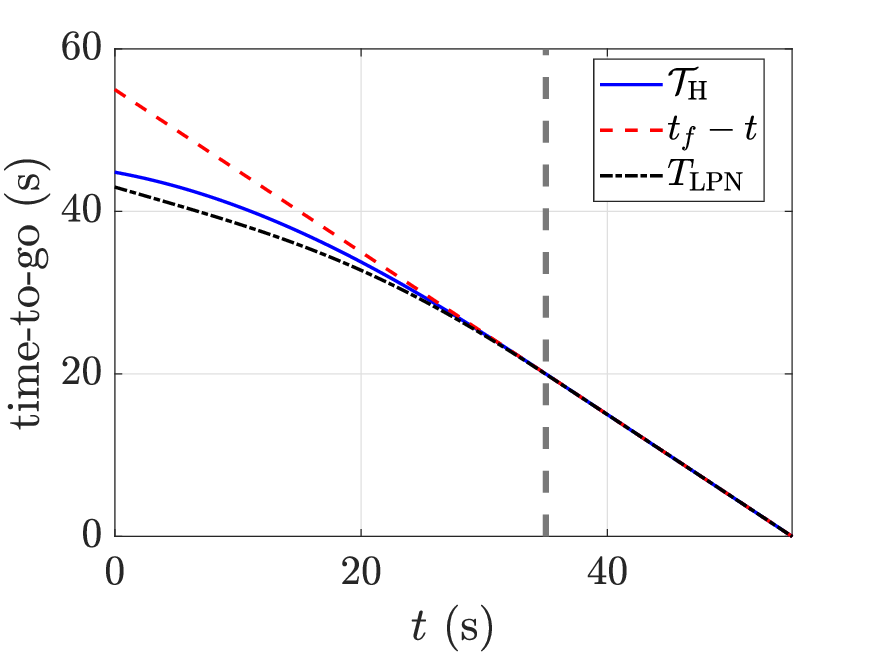}
		\caption{Time-to-go profiles.}
		\label{fig:lpn_tgo}
	\end{subfigure}
	
	\begin{subfigure}[t]{0.43\textwidth}
		\centering
		\includegraphics[width=\linewidth]{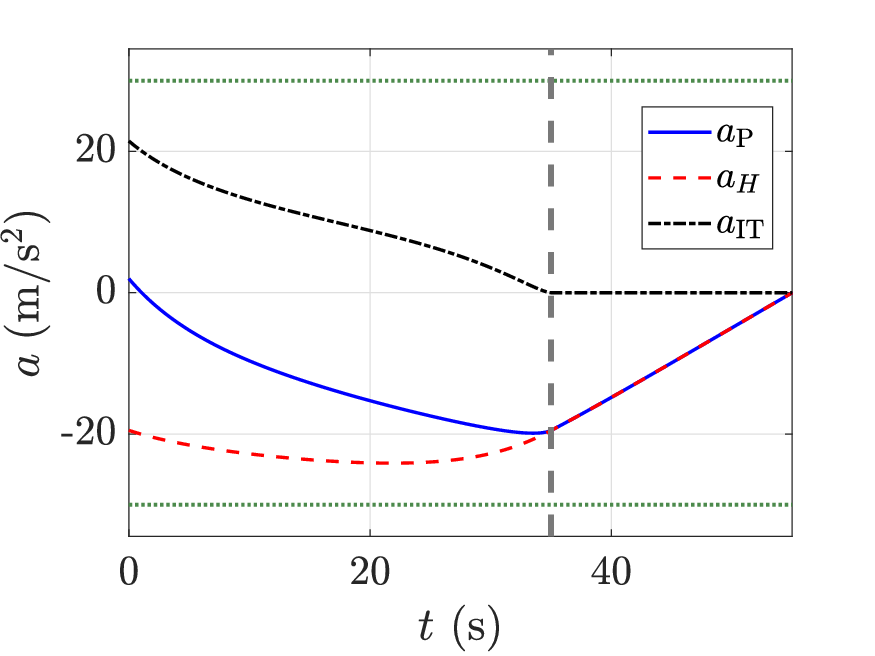}
		\caption{Lateral acceleration profiles.}
		\label{fig:lpn_aP}
	\end{subfigure}
	\hspace{0.04\textwidth}
	\begin{subfigure}[t]{0.43\textwidth}
		\centering
		\includegraphics[width=\linewidth]{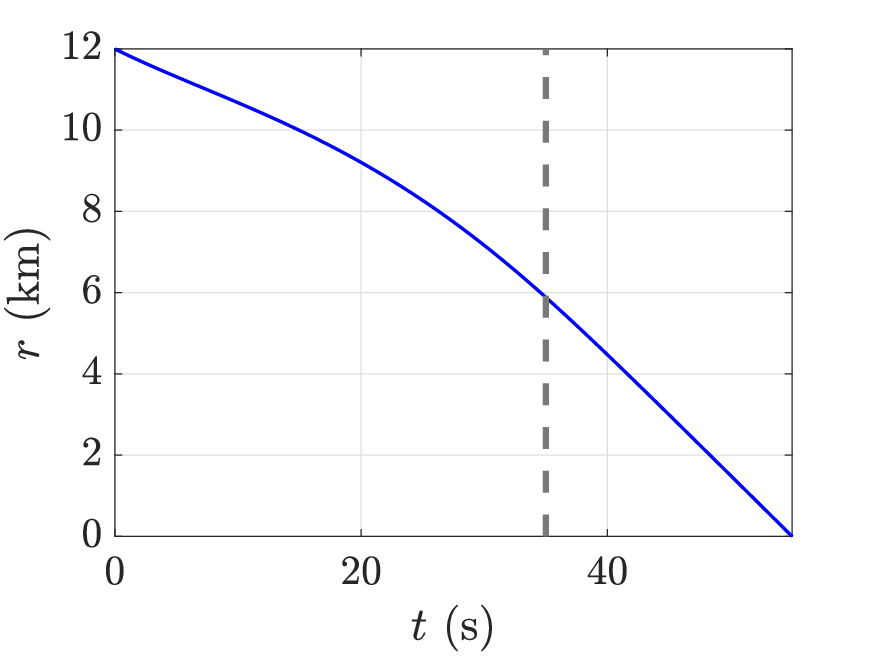}
		\caption{Relative range profile.}
		\label{fig:lpn_range}
	\end{subfigure}
	
	\begin{subfigure}[t]{0.43\textwidth}
		\centering
		\includegraphics[width=\linewidth]{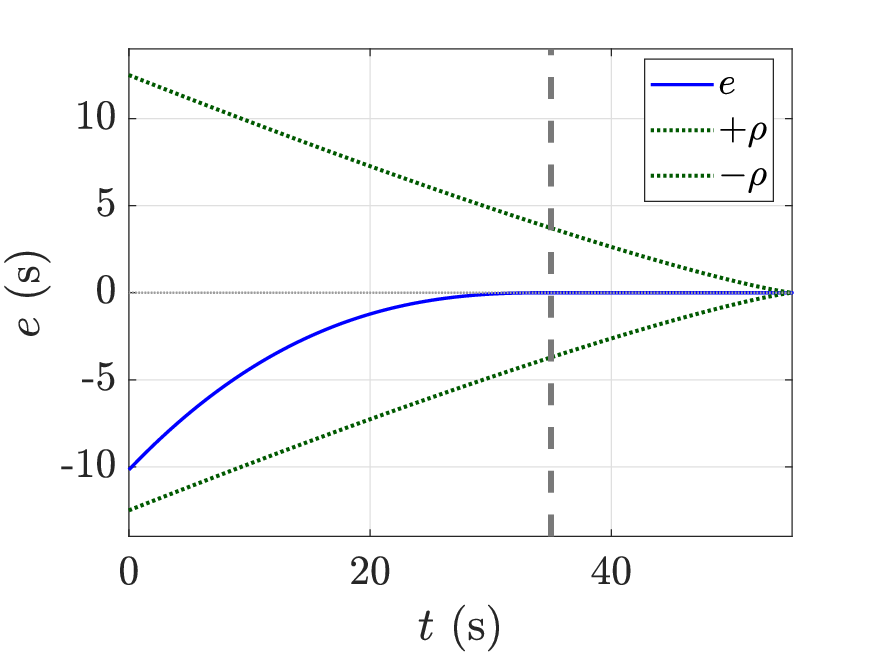}
		\caption{Timing error and prescribed funnel.}
		\label{fig:lpn_errors}
	\end{subfigure}
	\hspace{0.04\textwidth}
	\begin{subfigure}[t]{0.43\textwidth}
		\centering
		\includegraphics[width=\linewidth]{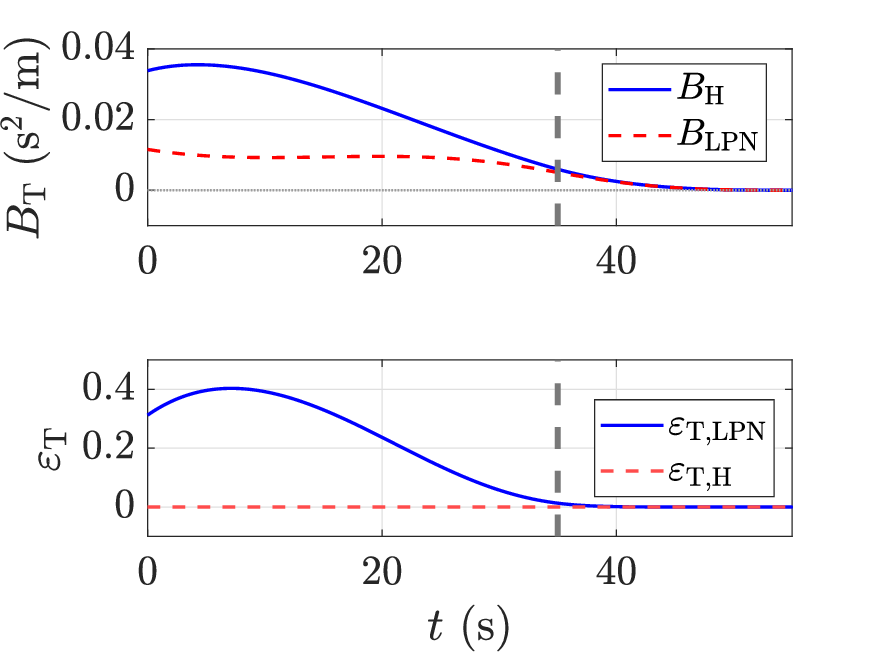}
		\caption{Timing effectiveness and predictor defect.}
		\label{fig:lpn_bT}
	\end{subfigure}
	
	\caption{Performance under large-heading PN-based impact-time guidance.}
	\label{fig:lpn_sim}
\end{figure*}

\begin{figure*}[h!]
	\centering
	\captionsetup[subfigure]{font=footnotesize,skip=1pt}
	
	\begin{subfigure}[t]{0.43\textwidth}
		\centering
		\includegraphics[width=\linewidth]{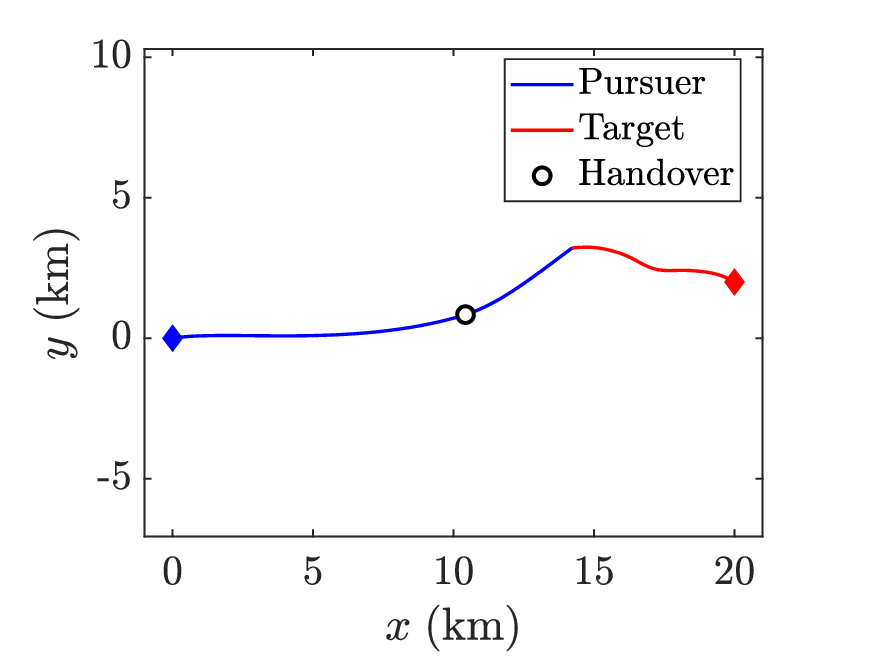}
		\caption{Agents' trajectories.}
		\label{fig:devia_traj}
	\end{subfigure}
	\hspace{0.04\textwidth}
	\begin{subfigure}[t]{0.43\textwidth}
		\centering
		\includegraphics[width=\linewidth]{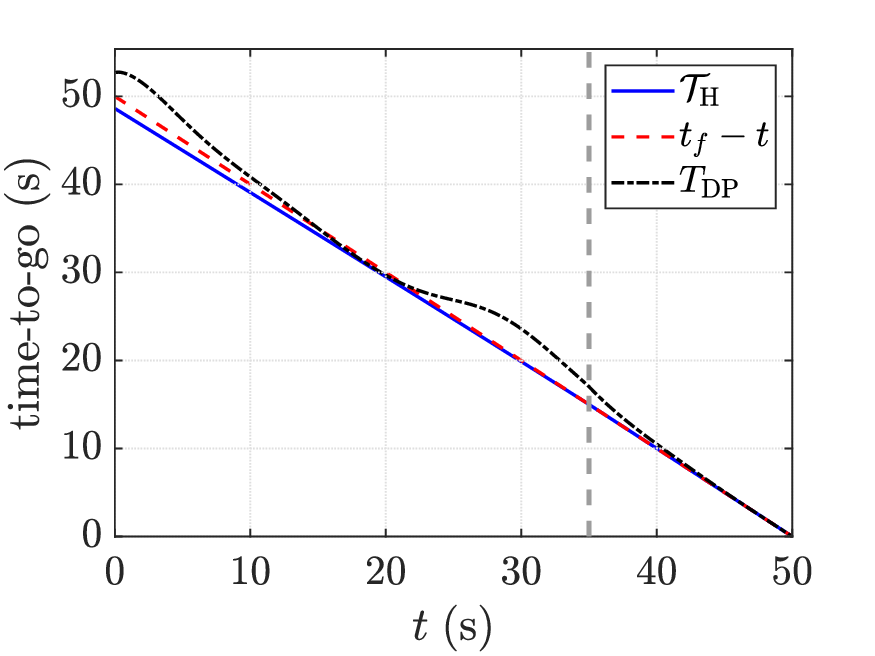}
		\caption{Time-to-go profiles.}
		\label{fig:devia_tgo}
	\end{subfigure}
	
	\begin{subfigure}[t]{0.43\textwidth}
		\centering
		\includegraphics[width=\linewidth]{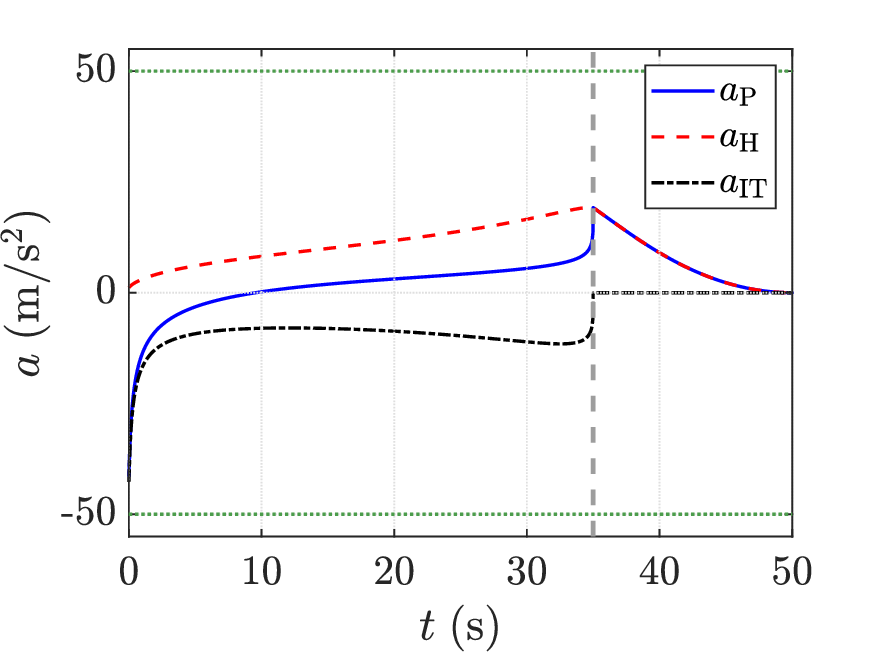}
		\caption{Lateral acceleration profiles.}
		\label{fig:devia_aP}
	\end{subfigure}
	\hspace{0.04\textwidth}
	\begin{subfigure}[t]{0.43\textwidth}
		\centering
		\includegraphics[width=\linewidth]{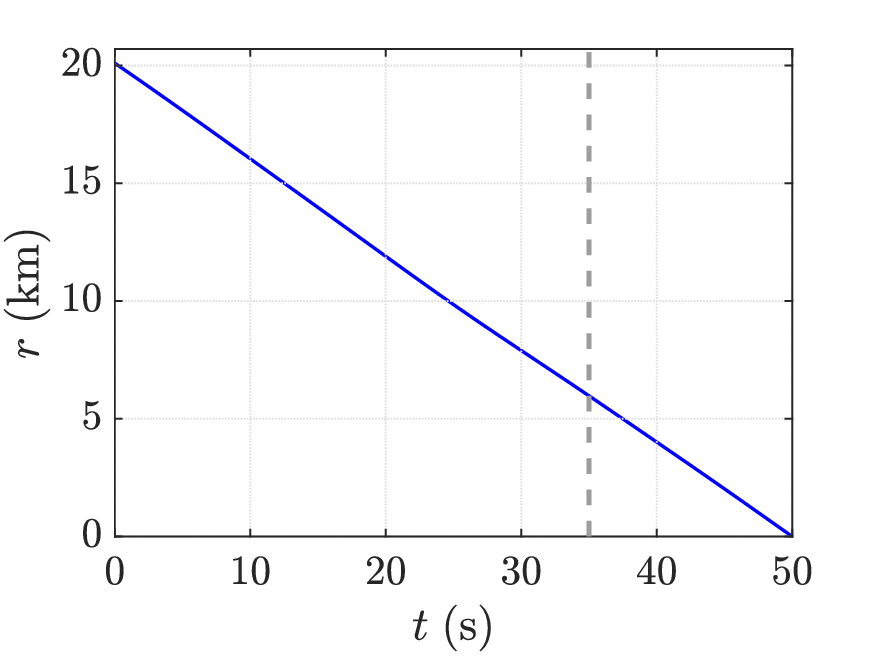}
		\caption{Relative range profile.}
		\label{fig:devia_range}
	\end{subfigure}
	
	\begin{subfigure}[t]{0.43\textwidth}
		\centering
		\includegraphics[width=\linewidth]{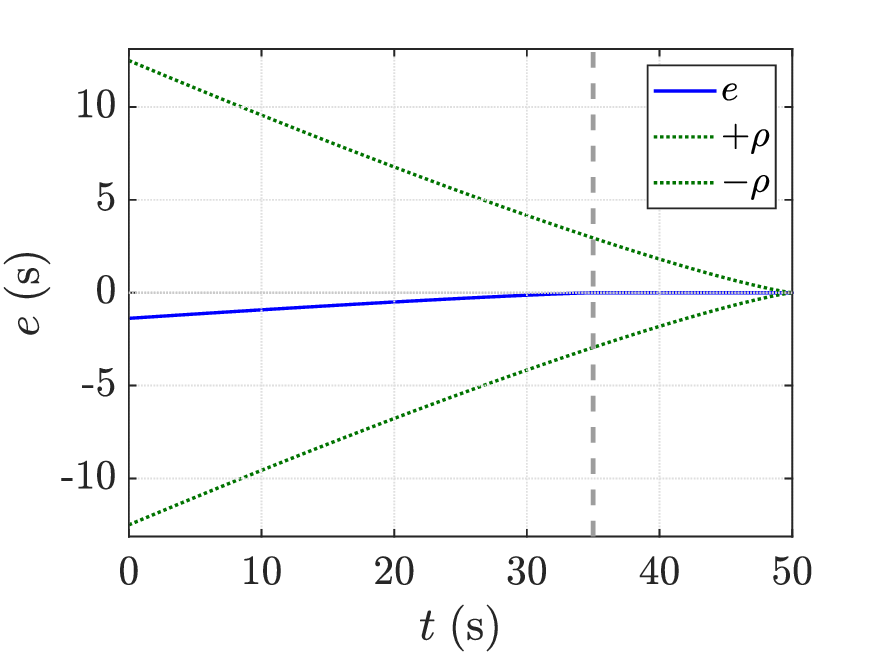}
		\caption{Timing error and prescribed funnel.}
		\label{fig:devia_errors}
	\end{subfigure}
	\hspace{0.04\textwidth}
	\begin{subfigure}[t]{0.43\textwidth}
		\centering
		\includegraphics[width=\linewidth]{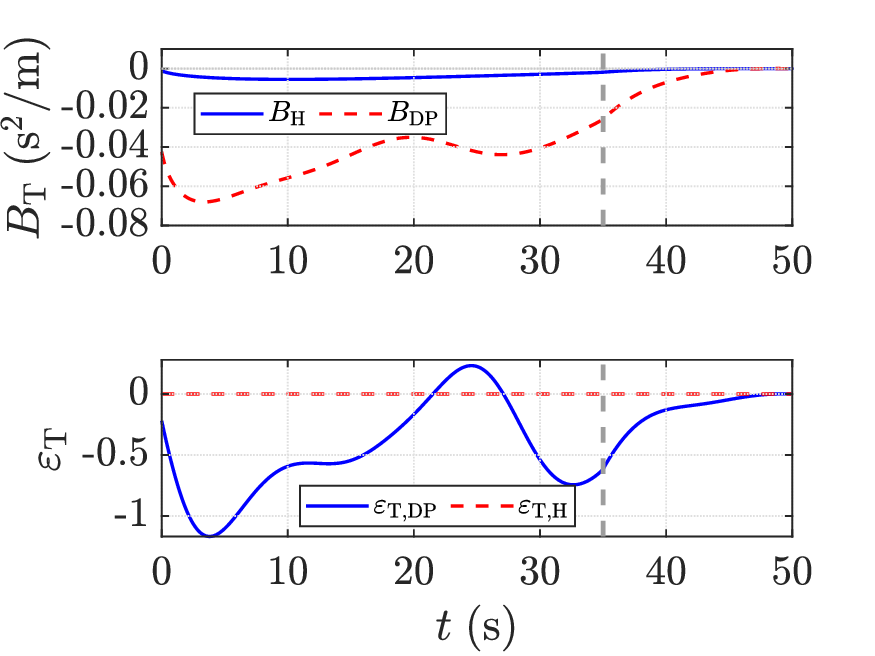}
		\caption{Timing effectiveness and predictor defect.}
		\label{fig:devia_bT}
	\end{subfigure}
	
	\caption{Performance under deviated-pursuit guidance.}
	\label{fig:devia_sim}
\end{figure*}

For the stationary-target case, the interceptor travels at $V_{\mathrm{P}}=300$~m/s with navigation constant $N=3$. The initial geometry is $r(0)=12$~km, $\theta(0)=0^\circ$, and $\sigma(0)=60^\circ$. The desired impact and handover times are $t_f=55$~s and $t_h=35$~s, respectively, with prescribed-time gain $c=2.5$ and acceleration limit $a_{\max}=30$~m/s$^2$. The interceptor reaches the target at the prescribed terminal time, with the range decreasing continuously to zero. The exact homing-compatible time-to-go $T_{\mathrm{H}}$ approaches $t_f-t$, while the large-heading approximation $T_{\mathrm{LPN}}$ initially differs from $T_{\mathrm{H}}$ because of the large heading error and converges toward it as alignment develops. The timing error starts inside the funnel with $\rho(0)=12.5$~s, remains within $\pm\rho(t)$, and approaches zero before handover. Before $t_h$, the bias $a_{\mathrm{IT}}$ modifies the baseline command $a_{\mathrm{H}}$, while $a_{\mathrm{P}}$ remains within $\pm a_{\max}$; after handover, $a_{\mathrm{IT}}$ vanishes and $a_{\mathrm{P}}=a_{\mathrm{H}}$. As collision-course alignment develops, $B_{\mathrm{H}}$ and $B_{\mathrm{LPN}}$ decrease, while the initially nonzero defect $\varepsilon_{\mathrm{T},\mathrm{LPN}}$ approaches the exact-map reference $\varepsilon_{\mathrm{T},\mathrm{H}}=0$.

For the deviated-pursuit case, the interceptor and target travel at $V_{\mathrm{P}}=300$~m/s and $V_{\mathrm{T}}=120$~m/s, respectively. The initial range and LOS angle are $r(0)=20.10$~km and $\theta(0)=5.71^\circ$, respectively, with initial headings $\gamma_{\mathrm{P}}(0)=10^\circ$ and $\gamma_{\mathrm{T}}(0)=150^\circ$. The target follows the acceleration history $a_{\mathrm{T}}(t)=6\sin{(0.20t)}+3\sin{(0.43t+0.5)}$~m/s$^2$. We set $t_f=50$~s, $t_h=35$~s, $c=1.2$, and $a_{\max}=50$~m/s$^2$.
The exact coordinate $\calT_{\mathrm{H}}$ is again computed as the numerical first hitting time in \eqref{eq:baseline_hitting_time}, using the baseline closed-loop flow and the known model of the target's maneuver. The deviated-pursuit expression is assessed through the predictor defect and timing effectiveness. The trajectories reach interception at the prescribed time, and the range decreases continuously toward zero. The coordinate $\calT_{\mathrm{H}}$ approaches $t_f-t$, while the initially different predictor $\calT_{\mathrm{DP}}$ approaches the terminal profile as the engagement evolves. The timing error remains between $\pm\rho(t)$ and approaches zero by handover. The preterminal bias modifies $a_{\mathrm{H}}$ within the total acceleration bounds $\pm a_{\max}$; at $t_h$, the bias vanishes and $a_{\mathrm{P}}$ coincides with the baseline command. Throughout the active correction interval, $B_{\mathrm{H}}$ remains nonzero and then decreases as collision-course alignment develops. The deviated-pursuit coefficient $B_{\mathrm{DP}}$ varies with the relative-motion geometry. The nonzero defect $\varepsilon_{\mathrm{T},\mathrm{DP}}$ quantifies the closed-form predictor's disagreement with the baseline homing flow; the exact coordinate has reference defect $\varepsilon_{\mathrm{T},\mathrm{H}}=0$.

\section{Conclusions and Future Work}\label{sec:conclusions}
We developed a normal-contraction perspective on impact-time guidance that augments a baseline homing command with a scalar timing correction. Under the stated engagement assumptions, terminal calibration and feasible funnel constraints link timing convergence to first capture at the assigned impact time. Declining lateral timing authority favors preterminal alignment, after which an exact nominal predictor allows baseline homing to preserve the assigned arrival time. The reported stationary and modeled maneuvering target cases achieve scheduled interception within the acceleration bounds; approximate predictors exhibit a mismatch that varies with engagement geometry. Future work will incorporate actuator dynamics and sensing constraints and examine three-dimensional cooperative interception.

    \bibliographystyle{IEEEtran}
\bibliography{references}

\end{document}